\documentclass[showpacs,amsmath,amsfonts,amssymb,aps,superscriptaddress]{revtex4}

\setlength{\paperheight}{11in}

\usepackage{makecell, tabularx}

\usepackage{booktabs}
\usepackage[dvipsnames, x11names]{xcolor}

\usepackage{comment}
\usepackage{enumitem}
\usepackage{graphicx}
\usepackage{float}
\usepackage{dcolumn}
\usepackage{bm}
\usepackage{mathrsfs}
\usepackage{amsmath}
\newtheorem{theorem}{Theorem}[section]

\newenvironment{proof}[1][Proof.]{\begin{trivlist}
\item[\hskip \labelsep {\bfseries #1}]}{\end{trivlist}}

\newcommand{\qedsymbol}{\ensuremath{\Box}}
\usepackage{amsfonts}
\usepackage{dsfont}
\makeatletter
\DeclareFontShape{OT1}{cmr}{m}{scit}{<->ssub * cmr/m/sc}{}
\makeatother
\begin{document}

\title{A Spectral Approach for Quasinormal Frequencies of Noncommutative Geometry-inspired Wormholes}
\author{Davide Batic}
\email{(Corresponding-author) davide.batic@ku.ac.ae}
\affiliation{
Mathematics Department, Khalifa University of Science and Technology, PO Box 127788, Abu Dhabi, United Arab Emirates}
\author{Denys Dutykh}
\email{denys.dutykh@ku.ac.ae}
\affiliation{
Mathematics Department, Khalifa University of Science and Technology, PO Box 127788, Abu Dhabi, United Arab Emirates}

\author{Joudy Jamal Beek}
\email{100061683@ku.ac.ae}
\affiliation{
Physics Department, Khalifa University of Science and Technology, PO Box 127788, Abu Dhabi, United Arab Emirates}
\date{\today}

\begin{abstract}
We present a detailed investigation of quasinormal modes (QNMs) for noncommutative geometry-inspired wormholes, focusing on scalar, electromagnetic, and vector-type gravitational perturbations. By employing the spectral method, the perturbation equations are reformulated into an eigenvalue problem over a compact domain, using Chebyshev polynomials to ensure high precision and fast numerical convergence. Our results reveal the absence of overdamped modes, with all detected QNMs exhibiting oscillatory behaviour. Additionally, for large values of the rescaled mass parameter, the QNMs of the noncommutative wormhole transition smoothly to those of the classical Schwarzschild wormhole, validating the accuracy of the spectral method. This work represents the first comprehensive exploration of QNMs in noncommutative geometry-inspired wormholes, shedding light on their stability and dynamical properties.

---
\end{abstract}
\pacs{04.62.+v,04.70.-s,04.70.Bw} 
\maketitle
\section{Introduction}

Wormholes are fascinating theoretical gravitational objects predicted by general relativity that represent hypothetical shortcuts connecting distinct regions of spacetime \cite{Flamm1916PZ, Scholz2001, Einstein1935PR}. The first traversable wormhole solutions were independently proposed by Ellis \cite{Ellis1973JMP, Ellis1974JMP} and Bronnikov \cite{Bronnikov1973APP}, involving exotic configurations that required hypothetical matter violating certain energy conditions. In 1988, Morris and Thorne \cite{Morris1988AJP} derived a static traversable wormhole, identified as a specific case of the Ellis-Bronnikov manifold.  Solutions in higher dimensions, such as those derived within Einstein-Gauss-Bonnet theory \cite{Bhawal1992PRD, Dotti2007PRD}, have expanded the scope of wormhole research. Other developments include wormholes localized on branes \cite{Anchordoqui2000PRD, Bronnikov2003PRD, LaCamera2003PLB, Lobo2007PRD}, solutions in Brans-Dicke theory \cite{Nandi1998PRD}, and configurations within semi-classical gravity frameworks (see Ref. \cite{Garattini2007CQG} and references therein). Using a systematic geometric approach, exact wormhole solutions were also identified \cite{Boehmer2007PRD, Boehmer2008CQG}. Additionally, models supported by equations of state linked to cosmic acceleration have been explored \cite{Sushkov2005PRD, Lobo2005PRD, Lobo2005PRDa, Lobo2006PRD, Lobo2007PRDa}, while geometries that respect the Null Energy Condition (NEC) have been further investigated in the context of conformal Weyl gravity \cite{Lobo2008CQG}. For a comprehensive discussion and recent developments in these areas, see \cite{Lemos2003PRD, Lobo2008CQGR}. More recently, \cite{Wall2013CQG} invoked the generalized second law of causal horizons to rule out traversable wormholes connecting two disjoint regions of spacetime. In 2017, \cite{Gao2017JHEP} constructed short-lived, non-exotic traversable wormholes within the AdS/CFT correspondence, while \cite{Fu2019CQG} developed a perturbative wormhole solution beyond the constraints of this duality. Finally, traversable wormholes in Einstein–Cartan gravity admitting conformal motion have been explored in \cite{Sarkar2024EPJP}.

According to the seminal work in string/M-theory of \cite{Witten1996NPB, Seiberg1999JHEP}, the concept of spacetime quantization has emerged as a compelling framework wherein spacetime coordinates are treated as noncommuting operators on a D-brane. This noncommutativity is characterized by the relation $[x^\mu, x^\nu] = i \theta^{\mu\nu}$, where $\theta^{\mu\nu}$ is an antisymmetric matrix that encapsulates the fundamental discretization of spacetime. Such a framework replaces point-like structures with smeared objects in flat spacetime \cite{Smailagic2003JPA}. This smearing effect, which mitigates singularities, is mathematically implemented by substituting the Dirac delta function with a Gaussian distribution of minimal length \(\sqrt{\theta}\) \cite{Nicolini2006PLB}, which reflects the uncertainty encoded in the noncommutative structure of spacetime. \cite{Garattini2009PLB} extended this framework to wormhole geometries, investigating whether the smearing effects of noncommutativity can address the classical requirement of exotic matter, which involves a stress-energy tensor violating the null energy condition (NEC) \cite{Morris1988AJP}. This approach complements the aforementioned studies, providing an alternative perspective on the energy conditions required to sustain wormholes.

A crucial aspect of any wormhole geometry is its response to perturbations, typically analyzed through QNMs. These characteristic oscillations encode information about the stability of the structure and its interactions with surrounding fields. While QNMs have been extensively studied for black holes and classical wormholes \cite{Konoplya2003ATMP, Konoplya2005PRD, Konoplya2011RMP, Konoplya2016JCAP, Konoplya2018PLB, Konoplya2022PRL, Konoplya2010PRD, Berti2009CQG, Cardoso2018PRD, Cardoso2018PRL, Mamani2022EPJC, Batic2024CGG, Batic2024EPJC, Batic2024PRD,Batic2025EPJC}, their analysis in noncommutative geometry-inspired wormholes remains largely unexplored.

In this work, we investigate the QNMs of the noncommutative geometry-inspired wormholes under scalar, electromagnetic, and gravitational perturbations. By means of a spectral method, we compute the QNMs with high precision and analyze their dependence on the noncommutative parameter and other geometric characteristics. Our study not only provides insights into the stability and resonant dynamics of these wormholes but also establishes a robust computational framework for exploring exotic spacetimes in the context of quantum gravity.

To compute the QNMs of the noncommutative geometry-inspired wormholes, we adopt a spectral method similar to that employed in \cite{Mamani2022EPJC, Batic2024EPJC, Batic2024CGG, Batic2024PRD}. It is worth mentioning that the spectral method was rigorously validated against known reference values in \cite{Batic2024EPJC} and subsequently cross-checked with the third- and sixth-order WKB approaches in \cite{Batic2024CGG}. Beginning with the perturbation equations for massless scalar, electromagnetic, and gravitational fields, we reformulate them as an eigenvalue problem over a compact domain. This framework uses Chebyshev polynomials as basis functions, enabling fast convergence and high precision in the frequency domain. Our analysis does not detect the presence of purely imaginary QNMs, i.e. overdamped modes indicative of a rapid return to equilibrium without oscillation. Furthermore, we validate the robustness of our method by demonstrating that, for large values of the rescaled mass parameter, the QNMs of the noncommutative geometry-inspired wormholes converge to those of the classic Schwarzschild wormhole. This serves as a key benchmark, reinforcing the accuracy and reliability of our numerical approach.

The paper is organized as follows. Section II introduces the foundational aspects of noncommutative geometry-inspired wormholes, discussing the key properties of the line element and the governing equations for scalar, electromagnetic, and gravitational perturbations. Sections III and IV set up the application of the spectral method to the classic Schwarzschild wormhole and its noncommutative geometry-inspired counterpart. Here, the QNM boundary conditions are introduced as in \cite{BlazquezSalcedo2018PRD, Batic2024CGG}. In Section V, we describe the spectral method employed to compute the QNMs, including the mathematical formalism and numerical implementation. Section VI presents the results of our analysis, highlighting the dependence of QNMs on the wormhole parameters and comparing them with classical Schwarzschild wormhole results in the large mass limit. Finally, Section VI concludes the study with a summary of key findings and potential directions for future research. Supplementary materials and numerical codes used in this study are made available for transparency and reproducibility.

\section{THE NONCOMMUTATIVE GEOMETRY INSPIRED WORMHOLE: Metric and Equations of motion}

In this study, we investigate the behaviour of a massless scalar field denoted as $\psi$ within the spacetime of a noncommutative geometry-inspired wormhole. The manifold is described by a metric, expressed in natural units where $c = G_N = 1$, given by the following line element \cite{Garattini2009PLB, Nicolini2010CQG}
\begin{equation}\label{metric}
  ds^2 = -e^{2\Lambda(r)}dt^2+\frac{dr^2}{1-\frac{b(r)}{r}} + r^2d\vartheta^2 + r^2\sin^2{\vartheta}d\varphi^2, \quad \vartheta\in[0,\pi], \quad \varphi\in[0,2\pi).
\end{equation}
Here, $\Lambda(r)$ and $b(r)$ are defined as the lapse/redshift and shape functions, respectively, and are given by \cite{Garattini2009PLB, Nicolini2010CQG}
\begin{equation}
\Lambda(r)=0,\qquad b(r) = \frac{4M}{\sqrt{\pi}}\gamma\left(\frac{3}{2}, \frac{r^2}{4\theta}\right),
\end{equation} 
where
\begin{equation}
\gamma\left(\frac{3}{2}, \frac{r^2}{4\theta}\right) = \int_0^{r^2/4\theta} dt \, \sqrt{t} e^{-t},
\end{equation}
is the lower incomplete Gamma function, $M$ is the total mass of the gravitational object, and $\theta$ is a parameter encoding noncommutativity with the dimension of length squared. In the coordinate system $(t, r, \vartheta, \phi)$, the radial coordinate $r$ is constrained to $r > r_0$, where $r_0$ represents the position of the wormhole throat. Extending $r$ into the region $0 < r < b_0$ is not possible, as the metric tensor undergoes a signature change in this domain. However, it may still be feasible to define a second coordinate chart that is pairwise compatible with the original. This chart would overlap with the initial chart over an open subset of the manifold containing the wormhole throat and feature a smooth bijective transition function between the two. Such a construction would enable the description of the manifold region beyond the throat. This possibility will be explored further later in the discussion. In the following, it is convenient to rewrite \eqref{metric} in the form  
\begin{equation}\label{metric2}  
ds^2 = -dt^2 + \frac{dr^2}{f(r)} + r^2 d\vartheta^2 + r^2 \sin^2{\vartheta} \, d\varphi^2, \quad  
f(r) = 1 - \frac{4M}{\sqrt{\pi}r}\gamma\left(\frac{3}{2}, \frac{r^2}{4\theta}\right),  
\end{equation}  
as this representation, resembling the metric of a noncommutative geometry-inspired Schwarzschild black hole, simplifies the analysis of the throat location. Specifically, the procedure can be carried out in a manner analogous to the study of the event and Cauchy horizons in the aforementioned black hole \cite{Nicolini2006PLB, Batic2024EPJC}. If we introduce the rescaling
\begin{equation}
  x = \frac{r}{2M}, \qquad \mu = \frac{M}{\sqrt{\theta}},
\end{equation}
and use the identities \cite{Abramowitz1972}
\begin{equation}\label{ids}
  \gamma\left(\frac{3}{2}, w^2\right) = \frac{1}{2} \gamma\left(\frac{1}{2}, w^2\right) - w e^{-w^2}, \quad \gamma\left(\frac{1}{2}, w^2\right) = \sqrt{\pi} \, \mathrm{erf}(w),
\end{equation}
we can rewrite $f(r)$ in \eqref{metric2} as
\begin{equation}\label{f}
  f(x) = 1 - \frac{\mathrm{erf}(\mu x)}{x} + \frac{2\mu}{\sqrt{\pi}} e^{-\mu^2 x^2},
\end{equation}
where \(\mathrm{erf}(\cdot)\) denotes the error function. The insights drawn from Figure~\ref{fig0}, where $f$ is plotted as a function of the rescaled radial variable $x$, are as follows
\begin{enumerate}
\item 
\underline{Case $\mu < \mu_e = 1.904119076\ldots$}: There are no real roots, and consequently, no throats exist.  
\item 
\underline{Case $\mu = \mu_e$}: Two coinciding throats are present at $x_e = 0.7936575898$.  \item 
\underline{Case $\mu > \mu_e$}: There are two distinct throats located at $x_\pm$. The throat is defined to be at $x_0 = x_+$, as this root increases with $\mu$, while $x_- \to 0$. This definition ensures the throat location remains consistent with the expectation that it should grow larger as the mass parameter increases.  
\end{enumerate}
It is worth noting that while the non-extreme cases considered in this work satisfy the flare-out condition \cite{Garattini2009PLB} , thereby qualifying as traversable wormholes, the extreme case ($\mu = \mu_e$) represents a degenerate configuration where the two throats coincide. This extreme configuration does not satisfy the flare-out condition, and thus does not constitute a traversable wormhole. Although its physical interpretation remains unclear, it may represent a soliton-like gravitational structure or a geometric defect within noncommutative spacetime, deserving further investigation in future work.

\begin{figure}
\includegraphics[scale=0.35]{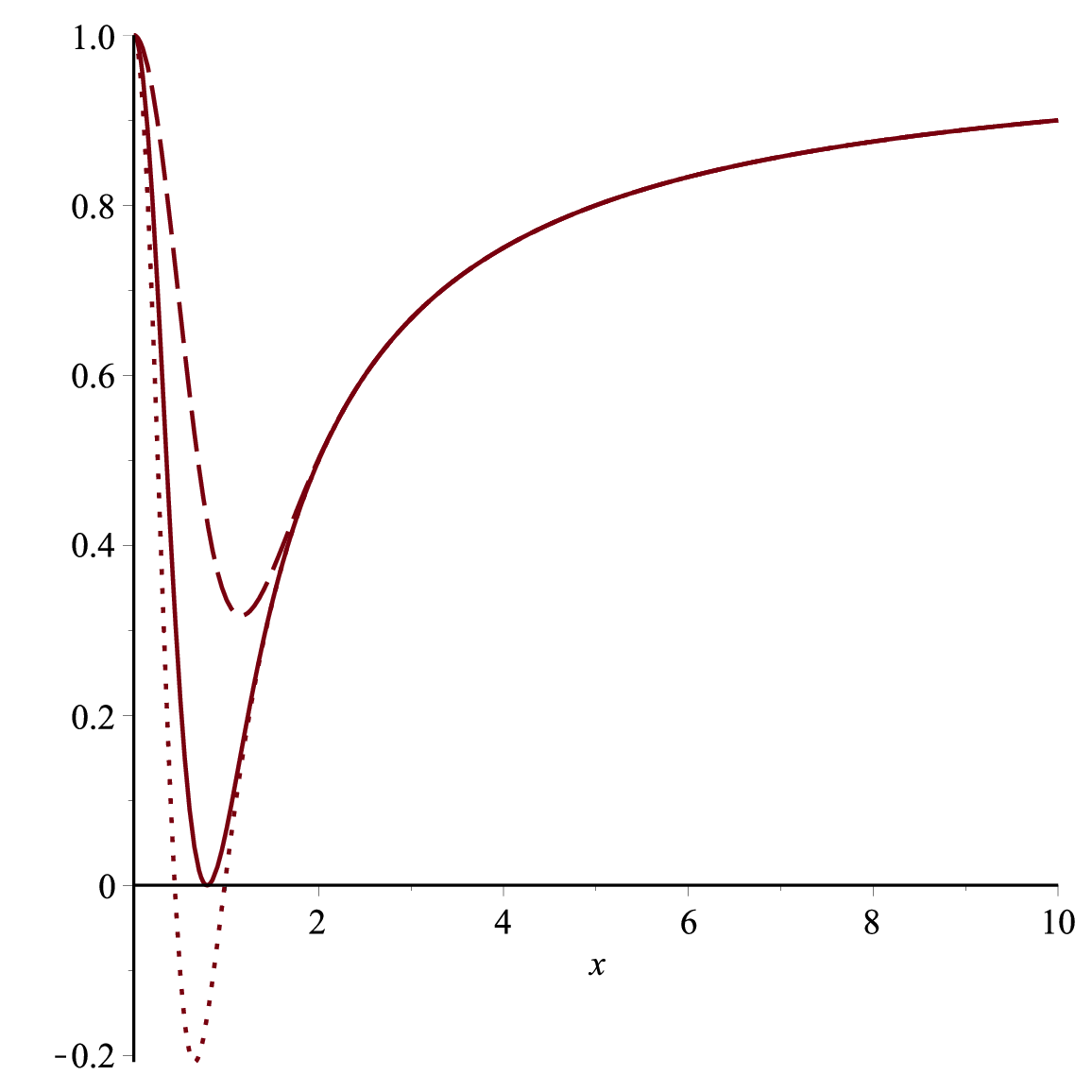}
\caption{\label{fig0}
Plot of the metric coefficient $g^{xx} = f(x)$, as defined in \eqref{f}. The extremal case occurs at $\mu = \mu_e = 1.904119076\ldots$ (solid line), where two coinciding throats are located at $x_e = 0.7936575898\ldots$. For $\mu > \mu_e$, a non-extremal wormhole with two distinct throats is present (dotted line shown for $\mu = 2.3$). For $\mu < \mu_e$, the absence of throats indicates that the gravitational object is not a wormhole (dashed line shown for $\mu = 1.3$).}
\end{figure}

Finally, it is important to note that when $\mu \gg 1$, equation \eqref{f} simplifies to  
\begin{equation}  
    f(x) = 1 - \frac{1}{x} + \mathcal{O}\left(e^{-\mu^2 x^2}\right).  
\end{equation}  
This indicates that, in the limit of large mass parameters (or equivalently, as $r/\sqrt{\theta} \to \infty$), the classical counterpart of the manifold described by \eqref{metric2} is a Schwarzschild wormhole rather than a Morris-Thorne wormhole. This observation is significant for two reasons. First, it justifies taking $x_+$ as the position of the throat instead of $x_-$. In the regime where $\mu \gg 1$, $x_- \to 0$ while $x_+ \to 1$, making $x_+$ the physically meaningful throat location. Second, it suggests that the QNMs of the noncommutative wormhole should closely approximate those of a Schwarzschild wormhole when the mass parameter is large. This serves as a useful benchmark for validating the accuracy of the QNM results obtained using the spectral method.

A study by \cite{Kim2008PTPS} demonstrated that within the spacetime described by the metric \eqref{metric}, and in the coordinate system $(t, r, \vartheta, \varphi)$ with $r > r_0$, the governing equation for a massless Klein-Gordon field can be expressed in a specific form. The field is assumed to have a time dependence of the form $e^{-i\omega t}$ and an angular dependence described by spherical harmonics. For $\ell = 0, 1, 2, \ldots$ and different types of perturbations ($s = 0$ for massless scalar perturbations, $s = 1$ for electromagnetic perturbations, and $s = 2$ for vector-type gravitational perturbations), the field equation takes the form  
\begin{equation}\label{ODE01}  
  \sqrt{f(r)}\frac{d}{dr}\left(\sqrt{f(r)}\frac{d\psi_{\omega\ell s}}{dr}\right) + \left[\omega^2 - U_{s,\ell}(r)\right]\psi_{\omega\ell s}(r) = 0,  
\end{equation}  
where 
\begin{equation}
  U_{s,\ell}(r)=\frac{\ell(\ell+1)}{r^2}+\frac{(1-s)(1+2s)}{2r}\frac{df}{dr}
\end{equation}
represents the effective potential for the perturbation of type $s$ and multipole moment $\ell$. At this point, introducing the rescaling $z = x/x_0$ leads to the following differential equation
\begin{eqnarray}
  &&\sqrt{f(z)}\frac{d}{dz}\left(\sqrt{f(z)}\frac{d\psi_{\Omega\ell s}}{dz}\right)+\left[x_0^2 \Omega^2-V_{s,\ell}(z)\right]\psi_{\Omega\ell s}(z)=0,\quad\Omega=2M\omega,\label{ODEor}\\
  &&V_{s,\ell}(z)=\frac{\ell(\ell+1)}{z^2}+\frac{(1-s)(1+2s)}{2z}\frac{df}{dz},\quad z>1,\\
  &&f(z)= 1 - \frac{\mathrm{erf}\left(\mu x_0 z\right)}{x_0 z} + \frac{2\mu}{\sqrt{\pi}} e^{-\mu^2 x_0^2 z^2}.
\end{eqnarray}
The following section focuses on extracting the QNMs of the Schwarzschild wormhole using the spectral method.

\section{Extraction of QNMs for the Schwarzschild Wormhole}
\label{WS}

In this case, $x_0 = 1$, and thus $z = x$. Additionally, the metric coefficient $g^{xx}$  simplifies to $f(x) = 1 - x^{-1}$. As a consequence, equation \eqref{ODEor} reduces to  
\begin{eqnarray}\label{ODEorS}  
    &&\sqrt{1 - \frac{1}{x}} \frac{d}{dx} \left(\sqrt{1 - \frac{1}{x}} \frac{d\psi_{\Omega\ell s}}{dx}\right) + \left[\Omega^2 - V_{(S)}(x)\right]\psi_{\Omega\ell s}(x) = 0, \quad \Omega = 2M\omega, \\  
    &&V_{(S)}(x) = \frac{\ell(\ell+1)}{x^2} + \frac{(1-s)(1+2s)}{2x^3}, \quad x > 1,  
\end{eqnarray}  
where, for notational convenience, the subscripts $s$ and $\ell$ in the effective potential have been replaced with the subscript $S$, representing the Schwarzschild case. To classify the singularities of equation \eqref{ODEorS}, it is useful to rewrite it as follows
\begin{eqnarray}
  &&\frac{d^2\psi_{\Omega\ell s}}{dx^2}+p(x)\frac{d\psi_{\Omega\ell s}}{dx}+q(x)\psi_{\Omega\ell s}(x)=0,\quad x>1,\label{ODEzn}\\
  &&p(x)=-\frac{1}{2x}+\frac{1}{2(x-1)},\\
  &&q(x)=\Omega^2+\frac{2\ell(\ell+1)+(1-s)(1+2s)}{2x}+\frac{2\Omega^2-2\ell(\ell+1)-(1-s)(1+2s)}{2(x-1)}+\frac{(1-s)(1+2s)}{2x^2}.
\end{eqnarray}
The differential equation \eqref{ODEzn} reveals the presence of two regular singular points at $x = 0$ and $x = 1$. However, only the singularity at $x = 1$ is relevant, as the other lies outside the valid range of the variable $x$. Additionally, the point at infinity is identified as an irregular singular point of rank 1 \cite{Bender1999}. The exponents of the regular singularity at $x = 1$ can be determined using Frobenius theory. Specifically, they are obtained by solving the indicial equation
\begin{equation}  
    \rho(\rho - 1) + p_0\rho + q_0 = 0,  
\end{equation}  
where  
\begin{equation}  
    p_0 = \lim_{x \to 1^+}(x - 1)p(x) = \frac{1}{2}, \quad  
    q_0 = \lim_{x \to 1^+}(x - 1)^2 q(x) = 0.  
\end{equation}  
The roots of the indicial equation are $\rho_1 = 0$ and $\rho_2 = 1/2$.  To proceed, the derivation of the QNM boundary conditions and the implementation of the spectral method are organized into two separate subsections, which correspond to the different cases arising from the distinct exponents at the regular singular point $x = 1$.  

\subsection{Case I}\label{subA}

Let us pick the first one, which is equivalent to the requirement that $\Psi_{\Omega\ell s}$ remains finite at the throat. The asymptotic behaviour of the solutions to equation \eqref{ODEzn} can be deduced using the method outlined in \cite{Olver1994MAA}. For this purpose, we observe that
\begin{equation}
  p(x) = \sum_{\kappa=0}^\infty\frac{\mathfrak{f}_\kappa}{x^k} = \mathcal{O}\left(\frac{1}{x^2}\right), \qquad
  q(x) = \sum_{\kappa=0}^\infty\frac{\mathfrak{g}_\kappa}{x^k}=\Omega^2+\frac{\Omega^2}{x}+\mathcal{O}\left(\frac{1}{x^2}\right).
\end{equation}
Given that at least one of the coefficients $\mathfrak{f}_0$, $\mathfrak{g}_0$, $\mathfrak{g}_1$ is nonzero, a formal asymptotic solution to \eqref{ODEzn} is represented by \cite{Olver1994MAA}
\begin{equation}
  \psi^{(j)}_{\Omega\ell s}(x) = x^{\mu_j}e^{\lambda_j x}\sum_{\kappa=0}^\infty\frac{a_{\kappa,j}}{x^\kappa}, \qquad j \in \{1,2\},
\end{equation}
where $\lambda_1$, $\lambda_2$, $\mu_1$ and $\mu_2$ are the roots of the characteristic equations
\begin{equation}
   \lambda^2 + \mathfrak{f}_0\lambda + \mathfrak{g}_0 = 0,\quad
   \mu_j = -\frac{\mathfrak{f}_1\lambda_j + \mathfrak{g}_1}{\mathfrak{f}_0 + 2\lambda_j}.
\end{equation}
A straightforward computation shows that $\lambda_\pm = \pm i\Omega$ and $\mu_\pm = \pm i\Omega/2$. As a result, we impose that the radial field exhibits outward radiation in the limit of $x \to +\infty$, that is
\begin{equation}
  \psi_{\Omega\ell s}\underset{{x\to +\infty}}{\longrightarrow} x^{\frac{i\Omega}{2}}e^{i\Omega x},
\end{equation}
while remaining finite at the throat ($\rho_1=0$). As a preliminary step in applying the Spectral Method, we introduce the following ansatz that entails the correct behaviour at the throat and towards positive space-like infinity
\begin{equation}
\psi_{\Omega\ell s}(x) = x^{\frac{i\Omega}{2}} e^{i\Omega(x-1)}U_{\Omega\ell s}(x).
\end{equation}
Additionally, we impose the condition that $U_{\Omega\ell s}(x)$ remains regular as $x \to 1^{+}$ and as $x \to +\infty$. Then, \eqref{ODEzn} becomes
\begin{eqnarray}
  &&Q_2(x)\frac{d^2 U_{\Omega\ell s}}{dx^2}+Q_1(x)\frac{dU_{\Omega\ell s}}{dx}+Q_0(x)U_{\Omega\ell s}(x)=0,\quad x>1,\label{ODEznt}\\
  &&Q_2(x)=1,\quad
  Q_1(x)=\frac{1+4i\Omega(x-1)(x-1/2)}{2x(x-1)},\\
  &&Q_0(x)=\frac{[3\Omega^2-4\ell(\ell+1)]x+\Omega^2+3i\Omega-2(1-s)(1+2s)}{4x^2(x-1)}.
\end{eqnarray}
In view of the application of the Spectral Method, we need to map the interval $(1, +\infty)$ onto the interval $(-1,1)$. This transformation is accomplished using $y = 1 - 2x^{-1}$, which sends $+\infty$ to $1$, and $1$ to $-1$. Consequently, the transformed differential equation is represented as follows
\begin{equation}\label{ODEm2}
  S_2(y)\ddot{U}_{\Omega\ell s}(y) + S_1(y)\dot{U}_{\Omega\ell s}(y) + S_0(y)U_{\Omega\ell s}(y) = 0,
\end{equation}
where
\begin{equation}
  S_2(y) = \frac{(1-y)^4}{4}, \quad
  S_1(y) = i\Omega E_1(y)+F_1(y), \quad
  S_0(y) = \Omega^2\Sigma_2(y) + i\Omega\Sigma_1(y) + \Sigma_0(y) 
\end{equation}
with
\begin{eqnarray}
  E_1(y) &=& -\frac{(y-5)(1-y)^2}{4},\quad
  F_1(y) = -\frac{(5y+3)(1-y)^3}{8(1+y)}, \\
\Sigma_2(y) &=&-\frac{(1-y)^2(y-7)}{16(1+y)},\quad
  \Sigma_1(y) = \frac{3(1-y)^3}{16(1+y)},\quad
  \Sigma_0(y) = \frac{(1-y)^2}{8(1+y)}\left[(2s+1)(s-1)(1-y)-4\ell(\ell+1)\right].
\end{eqnarray}

The direct analysis of the coefficient functions $S_i(y)$ shows that both $S_1(y)$ and $S_0(y)$ possess a simple pole at $y = 1$. Additionally, all $S_i(y)$ share a common second-order zero at $y = 1$. Therefore, in view of the application of the Spectral Method, we need to multiply equation \eqref{ODEm2} by $(1+y)/(1-y)^2$. As a consequence, we obtain the following differential equation
\begin{equation}\label{ODEhtfinal}
  M_2(y)\ddot{U}_{\Omega\ell }(y) + M_1(y)\dot{U}_{\Omega\ell s}(y) + M_0(y)U_{\Omega\ell s}(y) = 0,
\end{equation}
where
\begin{equation}\label{S210fin}
  M_2(y)=\frac{1}{4}(1+y)(1-y)^2,\quad
  M_1(y)=i\Omega\widehat{N}_1(y)+\widehat{N}_0(y),\quad
  M_0(y)=\Omega^2\widehat{C}_2(y)+i\Omega\widehat{C}_1(y)+\widehat{C}_0(y)
\end{equation}
with
\begin{eqnarray}
  N_1(y)&=&-\frac{1}{4}(1+y)(y-5),\quad 
  N_0(y)=-\frac{1}{8}(1-y)(5y+3),\label{N0fin}\\
  C_2(y)&=&\frac{1}{16}(7-y),\quad
  C_1(y)=\frac{3}{16}(1-y),\quad
  C_0(y)=\frac{1}{8}(1+2s)(s-1)(1-y)-\frac{\ell(\ell+1)}{2}.\label{C210}
\end{eqnarray}
It can be easily verified with Maple that
\begin{eqnarray}
    &&\lim_{y\to 1^{-}}M_2(y)=0=\lim_{y\to -1^{+}}M_2(y),\\
    &&\lim_{y\to 1^{-}}M_1(y)=2i\Omega,\quad
    \lim_{y\to -1^{+}}M_1(y)=\frac{1}{2},\\
    &&\lim_{y\to 1^{-}}M_0(y)=\frac{3}{8}\Omega^2-\frac{\ell(\ell+1)}{2},\quad
     \lim_{y\to -1^{+}}M_0(y)=\frac{\Omega^2}{2}+\frac{3}{8}i\Omega+\frac{1}{4}(2s+1)(s-1)-\frac{\ell(\ell+1)}{2}.
\end{eqnarray} 
In the final step, as we prepare to apply the spectral method, we transform the differential equation \eqref{ODEhtfinal} into the following form
\begin{equation}\label{TSCH}
  L_0\left[U_{\Omega\ell s}, \dot{U}_{\Omega\ell s}, \ddot{U}_{\Omega\ell s}\right] +  i L_1\left[U_{\Omega\ell s}, \dot{U}_{\Omega\ell s}, \ddot{U}_{\Omega\ell s}\right]\Omega +  L_2\left[U_{\Omega\ell s}, \dot{U}_{\Omega\ell s}, \ddot{U}_{\Omega\ell s}\right]\Omega^2 = 0.
\end{equation}
Here, we have
\begin{eqnarray}
  L_0\left[U_{\Omega\ell s}, \dot{U}_{\Omega\ell s}, \ddot{U}_{\Omega\ell s}\right] &=&L_{00}(y)U_{\Omega\ell s} + L_{01}(y)\dot{U}_{\Omega\ell s} + L_{02}(y)\ddot{U}_{\Omega\ell s},\label{L0none}\\
  L_1\left[U_{\Omega\ell s}, \dot{U}_{\Omega\ell s}, \ddot{U}_{\Omega\ell s}\right] &=& L_{10}(y)U_{\Omega\ell s} + L_{11}(y)\dot{U}_{\Omega\ell s} + L_{12}(y)\ddot{U}_{\Omega\ell s}, \label{L1none}\\
  L_2\left[U_{\Omega\ell s}, \dot{U}_{\Omega\ell s}, \ddot{U}_{\Omega\ell s}\right] &=& L_{20}(y)U_{\Omega\ell s} + L_{21}(y)\dot{U}_{\Omega\ell s} + L_{22}(y)\ddot{U}_{\Omega\ell s}.\label{L2none}
\end{eqnarray}
We direct the reader to Table~\ref{tableAlt} where we summarised the $L_{ij}$ terms presented in equations \eqref{L0none} through \eqref{L2none}, along with their respective limiting values at $y = \pm 1$.

\begin{table}
\caption{Definitions of the coefficients $L_{ij}$ and their corresponding behaviours at the endpoints of the interval $-1 \leq y \leq 1$.}
\begin{center}
\begin{tabular}{ | c | c | c | c | c | c | c | c }
\hline
$(i,j)$  & $\displaystyle{\lim_{y\to -1^+}}L_{ij}$  & $L_{ij}$ & $\displaystyle{\lim_{y\to 1^-}}L_{ij}$  \\ \hline
$(0,0)$ &  $\frac{(2s+1)(s-1)}{4}-\frac{\ell(\ell+1)}{2}$          & $C_0$                  & $-\frac{\ell(\ell+1)}{2}$\\ \hline
$(0,1)$ &  $\frac{1}{2}$            & $N_0$                  & $0$\\ \hline
$(0,2)$ &  $0$            & $M_2$                  & $0$\\ \hline 
$(1,0)$ &  $\frac{3}{8}$            & $C_1$                  & $0$\\ \hline 
$(1,1)$ &  $0$            & $N_1$                  & $2$\\ \hline 
$(1,2)$ &  $0$            & $0$                              & $0$\\ \hline 
$(2,0)$ &  $\frac{1}{2}$            & $C_2$                  & $\frac{3}{8}$\\ \hline
$(2,1)$ &  $0$            & $0$                    & $0$\\ \hline
$(2,2)$ &  $0$            & $0$                    & $0$\\ \hline
\end{tabular}
\label{tableAlt}
\end{center}
\end{table} 

\subsection{Case II}
\label{subB}

In this case, we select the Frobenius solution corresponding to the exponent $\rho = 1/2$, which imposes the condition $\psi_{\Omega\ell s}(1) = 0$. Using the following ansatz, which captures the correct behaviour at the throat and asymptotically at space-like infinity ($x \to +\infty$) 
\begin{equation}
  \psi_{\Omega\ell\epsilon}(x) = \sqrt{1 - \frac{1}{x}} \, x^{\frac{i\Omega}{2}} e^{i\Omega(x-1)} U_{\Omega\ell s}(x),
\end{equation}  
and imposing the condition that $U_{\Omega\ell s}(x)$ remains regular as $x \to 1^+$ and $x \to +\infty$, equation \eqref{ODEzn} can be rewritten in the form of \eqref{ODEznt}, with the coefficients $Q_i$ replaced by
\begin{eqnarray}
  \widetilde{Q}_2(x)&=&1,\quad
  \widetilde{Q}_1(x)=\frac{4i\Omega x^2-2i\Omega x-2i\Omega+3}{2x(x-1)},\\
  \widetilde{Q}_0(x)&=&\frac{[3\Omega^2+4i\Omega-4\ell(\ell+1)]x+\Omega^2+5i\Omega+4s^2-2s-6}{4x^2(x-1)}.    
\end{eqnarray}
The transformation $y = 1 - 2x^{-1}$ maps the interval $(1, +\infty)$ onto the entire real line. Under this mapping, the resulting equation keeps the same form as \eqref{ODEm2}, except that the coefficients $S_i$ are replaced by
\begin{equation}
  \widetilde{S}_2(y)=\frac{(1-y)^4}{4},\quad
  \widetilde{S}_1(y)=i\Omega \widetilde{E}_1(y)+\widetilde{F}_1(y),\quad
  \widetilde{S}_0(y) = \Omega^2\widetilde{\Sigma}_2(y) + i\Omega\widetilde{\Sigma}_1(y) + \widetilde{\Sigma}_0(y) 
\end{equation}
with
\begin{eqnarray}
  \widetilde{E}_1(y)&=&-\frac{(y-5)(1-y)^2}{4},\quad
  \widetilde{F}_1(y)=-\frac{(1+7y)(1-y)^3}{8(1+y)},\quad
  \widetilde{\Sigma}_2(y)=\frac{(1-y)^2(7-y)}{16(1+y)},\\
  \widetilde{\Sigma}_1(y)&=&\frac{(1-y)^2(13-5y)}{16(1+y)},\quad
  \widetilde{\Sigma}_0(y)=\frac{(1-y)^2}{8(1+y)}\left[(s+1)(2s-3)(1-y)-4\ell(\ell+1)\right].
\end{eqnarray}
An examination of the coefficient functions $\widetilde{S}_i(y)$ reveals that all coefficients have a common second-order zero at $y = 1$, while both $\widetilde{S}_1(y)$ and $\widetilde{S}_0(y)$ exhibit a simple pole at $y = -1$. To enable the application of the Spectral Method, the corresponding differential equation must be multiplied by $(1 + y)/(1 - y)^2$. This modification produces a differential equation analogous to \eqref{ODEhtfinal}, with the coefficients $M_i(y)$ replaced by
\begin{equation}\label{S210fin2}
  \widetilde{M}_2(y)=\frac{(1+y)(1-y)^2}{4},\quad
  \widetilde{M}_1(y)=i\Omega\widetilde{N}_1(y)+\widetilde{N}_0(y),\quad
\widetilde{M}_0(y)=\Omega^2\widetilde{C}_2(y)+i\Omega\widetilde{C}_1(y)+\widetilde{C}_0(y)
\end{equation}
with
\begin{eqnarray}
\widetilde{N}_1(y)&=&\frac{(1+y)(5-y)}{4},\quad 
\widetilde{N}_0(y)=-\frac{(1-y)(1+7y)}{8},\label{N0fin2}\\
\widetilde{C}_2(y)&=&\frac{7-y}{16},\quad
\widetilde{C}_1(y)=\frac{13-5y}{16},\quad
\widetilde{C}_0(y)=\frac{1}{8}(1+s)(2s-3)(1-y)-\frac{\ell(\ell+1)}{2}.\label{C2102}
\end{eqnarray}
It is straightforward to check that
\begin{eqnarray}
    &&\lim_{y\to 1^{-}}\widetilde{M}_2(y)=0=\lim_{y\to -1^{+}}\widehat{M}_2(y),\\
    &&\lim_{y\to 1^{-}}\widetilde{M}_1(y)=2i\Omega,\quad
    \lim_{y\to -1^{+}}\widetilde{M}_1(y)=\frac{3}{2},\\
    &&\lim_{y\to 1^{-}}\widetilde{M}_0(y)=\frac{3\Omega^2}{8}+i\frac{\Omega}{2}-\frac{\ell(\ell+1)}{2},\quad
     \lim_{y\to -1^{+}}\widetilde{M}_0(y)=\frac{\Omega^2}{2}+i\frac{9\Omega}{8}+\frac{1}{4}(s+1)(2s-3)-\frac{\ell(\ell+1)}{2}.
\end{eqnarray} 
At this stage, we can adopt the same approach outlined in the previous subsection to transform the corresponding differential equation into the form of \eqref{TSCH}. For clarity, we refer the reader to Table~\ref{tableAlt2}, which summarizes the $\widetilde{L}_{ij}$ terms that replace the $L_{ij}$ terms in equations \eqref{L0none} through \eqref{L2none}, along with their respective limiting values at \(y = \pm 1\).

\begin{table}
\caption{Definitions of the coefficients $\widetilde{L}_{ij}$ and their corresponding behaviours at the endpoints of the interval $-1 \leq y \leq 1$.}
\begin{center}
\begin{tabular}{ | c | c | c | c | c | c | c | c }
\hline
$(i,j)$  & $\displaystyle{\lim_{y\to -1^+}}\widetilde{L}_{ij}$  & $\widetilde{L}_{ij}$ & $\displaystyle{\lim_{y\to 1^-}}\widetilde{L}_{ij}$  \\ \hline
$(0,0)$ &  $\frac{(2s-3)(s+1)}{4}-\frac{\ell(\ell+1)}{2}$          & $\widetilde{C}_0$                  & $-\frac{\ell(\ell+1)}{2}$\\ \hline
$(0,1)$ &  $\frac{3}{2}$            & $\widetilde{N}_0$                  & $0$\\ \hline
$(0,2)$ &  $0$            & $\widetilde{M}_2$                  & $0$\\ \hline 
$(1,0)$ &  $\frac{9}{8}$            & $\widetilde{C}_1$                  & $\frac{1}{2}$\\ \hline 
$(1,1)$ &  $0$            & $\widetilde{N}_1$                  & $2$\\ \hline 
$(1,2)$ &  $0$            & $0$                              & $0$\\ \hline 
$(2,0)$ &  $\frac{1}{2}$            & $\widetilde{C}_2$                  & $\frac{3}{8}$\\ \hline
$(2,1)$ &  $0$            & $0$                    & $0$\\ \hline
$(2,2)$ &  $0$            & $0$                    & $0$\\ \hline
\end{tabular}
\label{tableAlt2}
\end{center}
\end{table}

\section{Extraction of QNMs for the noncommutative geometry - inspired Wormhole}
\label{NCW}

To establish the QNM boundary conditions at the throat and at infinity, we first need to determine the asymptotic behaviour of the radial solution $\psi_{\Omega \ell s}$ as $z \to 1^+$ and $z \to +\infty$. From this asymptotic analysis, we can then extract the QNM boundary conditions.  Later, as a consistency check, we will verify that for $\mu\gg 1$, the QNMs of the noncommutative geometry-inspired wormhole go over into the corresponding QNMs of the Schwarzschild wormhole. We conduct our analysis by examining the behaviour of the radial field in two separate regions.

\begin{enumerate}
\item 
{\underline{Asymptotic behavior as $z\to 1^+$}}: Given that $z = 1$ is a simple zero of $f(z)$, we can express it in the form $f(z) = (z - 1)h(z)$, where $h$ is an analytic function at $z = 1$ and satisfies $h(1) = f'(1) \neq 0$. Here, the prime symbol denotes differentiation with respect to $z$. This formulation allows us to rewrite equation (\ref{ODEor}) in the following form
\begin{eqnarray}
    &&\frac{d^2\psi_{\Omega \ell s}}{dz^2} + \mathfrak{p}(z) \frac{d\psi_{\Omega \ell s}}{dz} + \mathfrak{q}(z)\psi_{\Omega \ell s}(z) = 0, \label{ODEZW} \\
    &&\mathfrak{p}(z)=\frac{f^{'}(z)}{2f(z)} = \frac{1}{2(z - 1)} + \frac{h'(z)}{h(z)}, \\
    &&\mathfrak{q}(z)=\frac{x_0^2\Omega^2-V_{s,\ell}(z)}{f(z)} = \frac{x_0^2 \Omega^2}{(z - 1) h(z)} - \frac{1}{(z - 1)h(z)} \left[\frac{\ell(\ell + 1)}{z^2} + \frac{(1-s)(1+2s)}{2z}\frac{df}{dz}\right].
\end{eqnarray}
Since both $\mathfrak{p}$ and $\mathfrak{q}$ have a simple pole at $z = 1$, this point can be classified as a regular singular point of (\ref{ODEZW}) by Frobenius theory \cite{Ince1956}. Consequently, we can construct solutions in the form
\begin{equation}
    \psi_{\Omega \ell s}(z) = (z - 1)^\rho \sum_{\kappa = 0}^\infty a_\kappa (z - 1)^\kappa.
\end{equation}
The leading behavior at $z = 1$ is represented by the term $(z - 1)^\rho$, where $\rho$ is determined by the indicial equation
\begin{equation}\label{indicial}
    \rho(\rho - 1) + \mathfrak{p}_0 \rho + \mathfrak{q}_0 = 0,
\end{equation}
with
\begin{equation}
    \mathfrak{p}_0 = \lim_{z \to 1} (z - 1)\mathfrak{p}(z) = \frac{1}{2}, \qquad
    \mathfrak{q}_0 = \lim_{z \to 1} (z - 1)^2\mathfrak{q}(z) =0.
\end{equation}
The roots of (\ref{indicial}) are $\rho_1 = 0$ and $\rho_2=1/2$.
\item
{\underline{Asymptotic behaviour as $z\to+\infty$}}: It can be obtained by using the method described in \cite{Olver1994MAA}. To proceed, we observe that
\begin{equation}
    \mathfrak{p}(z) = \sum_{\kappa=0}^\infty \frac{\mathfrak{f}_\kappa}{z^k} = \mathcal{O}\left(\frac{1}{z^2}\right), \qquad
    \mathfrak{q}(z) = \sum_{\kappa=0}^\infty \frac{\mathfrak{g}_\kappa}{z^k} = x_0^2 \Omega^2 + \frac{x_0 \Omega^2}{z} + \mathcal{O}\left(\frac{1}{z^2}\right).
\end{equation}
Given that at least one of the coefficients \( \mathfrak{f}_0 \), \( \mathfrak{g}_0 \), or \( \mathfrak{g}_1 \) is non-zero, a formal solution to (\ref{ODEZW}) is given by \cite{Olver1994MAA}
\begin{equation}
    \psi^{(j)}_{\Omega \ell s}(z) = z^{\mu_j} e^{\lambda_j z} \sum_{\kappa=0}^\infty \frac{a_{\kappa, j}}{z^\kappa}, \qquad j \in \{1,2\},
\end{equation}
where \( \lambda_1 \), \( \lambda_2 \), \( \mu_1 \), and \( \mu_2 \) are the roots of the characteristic equations
\begin{equation}
    \lambda^2 + \mathfrak{f}_0 \lambda + \mathfrak{g}_0 = 0, \quad
    \mu_j = -\frac{\mathfrak{f}_1 \lambda_j + \mathfrak{g}_1}{\mathfrak{f}_0 + 2 \lambda_j}.
\end{equation}
A straightforward calculation shows that $\lambda_\pm = \pm i x_0 \Omega$ and $\mu_\pm = \pm i \Omega/2$. Consequently, the QNM boundary condition at spatial infinity can be expressed as
\begin{equation}\label{QNMBCzinf}
    \psi_{\Omega \ell s} \underset{{z \to +\infty}}{\longrightarrow} z^{\frac{i\Omega}{2}} e^{i x_0 \Omega z}.
\end{equation}
\end{enumerate}
At this stage, we can follow the same procedure as for the classic Schwarzschild wormhole or the Morris-Thorne wormhole \cite{Batic2024CGG} to establish two sets of QNM boundary conditions.

\subsection{Case A}
\label{subANC}

We impose that the radial field exhibits outward radiation in the limit of $z \to +\infty$, that is
\begin{equation}
  \psi_{\Omega\ell s}\underset{{z\to +\infty}}{\longrightarrow} z^{\frac{i\Omega}{2}}e^{i x_0\Omega z},
\end{equation}
while remaining finite at the throat ($\rho_1=0$). As a preliminary step in applying the Spectral Method, we introduce the following ansatz that simultaneously captures the correct behaviour at the throat and towards positive space-like infinity
\begin{equation}
\psi_{\Omega\ell s}(z) = z^{\frac{i\Omega}{2}} e^{i x_0\Omega(z-1)}\Phi_{\Omega\ell s}(z).
\end{equation}
Additionally, we impose the condition that $\Phi_{\Omega\ell s}(z)$ remains regular as $z \to 1^{+}$ and as $z \to +\infty$. Then, \eqref{ODEZW} becomes
\begin{equation}\label{ODEznoneA}
    \mathfrak{P}_2(z) \Phi''_{\Omega \ell s}(z) + \mathfrak{P}_1(z) \Phi'_{\Omega \ell s}(z) + \mathfrak{P}_0(z) \Phi_{\Omega \ell s}(z) = 0
\end{equation}
with
\begin{eqnarray}
    \mathfrak{P}_2(z)&=&z^2f(z),\quad \mathfrak{P}_1(z)=\frac{z^2}{2}f^{'}(z)+i\Omega z(2x_0 z+1)f(z),\\
    \mathfrak{P}_0(z)&=&\left[x_0^2 z^2(1-f(z))-\left(x_0 z+\frac{1}{4}\right)f(z)\right]\Omega^2+\frac{i\Omega}{2}\left[z\left(x_0 z+\frac{1}{2}\right)f^{'}(z)-f(z)\right]-z^2 V_{s,\ell}(z).\\
\end{eqnarray}
At this stage, it is convenient to apply the transformation $z = 2/(1-y)$, which maps the point at infinity and the throat to $y = 1$ and $y = -1$, respectively. In the following, a dot stands for differentiation with respect to the new variable $y$. As a result, equation (\ref{ODEznoneA}) becomes
\begin{equation}\label{ODEynoneA}
    \mathfrak{S}_2(y)\ddot{\Phi}_{\Omega\ell s}(y) + \mathfrak{S}_1(y)\dot{\Phi}_{\Omega\ell s}(y) + \mathfrak{S}_0(y)\Phi_{\Omega\ell s}(y) = 0,
\end{equation}
where
\begin{eqnarray}
  \mathfrak{S}_2(y) &=&(1-y)^2 f(y), \label{S2ononeA} \\
  \mathfrak{S}_1(y) &=&-2(1-y)f(y)+\frac{(1-y)^2}{2}\dot{f}(y)+i\Omega(4x_0+1-y)f(y),\label{S1ononeA}\\
  \mathfrak{S}_0(y) &=&\left[\frac{4x_0^2(1-f(y))}{(1-y)^2}-\left(\frac{2x_0}{1-y}+\frac{1}{4}\right)f(y)\right]\Omega^2+\frac{i\Omega}{2}\left[\left(2x_0+\frac{1-y}{2}\right)\dot{f}(y)-f(y)\right]-\frac{4V_{s,\ell}(y)}{(1-y)^2}. \label{S0ononeA}
\end{eqnarray}
In addition, we also require that $\Phi_{\Omega\ell s}(y)$ remains regular at $y=\pm 1$. As a consequence of the transformation introduced above, we have
\begin{eqnarray}
f(y)&=& 1 - \frac{1-y}{2x_0}\mbox{erf}\left(\frac{2\mu x_0}{1-y}\right) + \frac{2\mu}{\sqrt{\pi}}e^{-\frac{4\mu^2 x_0^2}{(1-y)^2}},\label{fyA}\\
V_{s,\ell}(y) &=&\frac{(1-y)^2}{4}\left[\ell(\ell+1)+\frac{(1-s)(1+2s)}{2}(1-y)\dot{f}(y)\right],\label{fvA}\\
V_{s,\ell}(-1)&=&\ell(\ell+1)+\frac{(1-s)(1+2s)}{2}\left(1-\frac{4x_0^2\mu^3}{\sqrt{\pi}}e^{-\mu^2 x_0^2}\right).
\label{fv1A}
\end{eqnarray}

\begin{table}
\caption{Classification of the points $y=\pm 1$ for the relevant functions defined by  (\ref{S2ononeA})-(\ref{S0ononeA}), and (\ref{fyA})-(\ref{fv1A}). The abbreviation $z$ ord $n$ stands for zero of order $n$. Notice that $V_{s,\ell}(-1)$ has been reported in \eqref{fv1A}.}
\begin{center}
\begin{tabular}{ | c | c | c | c | c | c | c | c }
\hline
$y$  & $f(y)$         & $V_{s,\ell}(y)$  & $\mathfrak{S}_2(y)$ & $\mathfrak{S}_1(y)$  & $\mathfrak{S}_0(y)$\\ \hline
$-1$ & z \mbox{ord} 1 & $V_{s,\ell}(-1)$ & z \mbox{ord} 1      & $2\dot{f}(-1)$       & $x_0^2\Omega^2+\frac{i\Omega}{2}(2x_0+1)\dot{f}(-1)-V_{s,\ell}(-1)$ \\ \hline
$+1$ & $+1$           & z \mbox{ord} 2   & z \mbox{ord} 2      & $4i x_0\Omega$       & $\frac{3}{4}\Omega^2-\ell(\ell+1)$\\ \hline
\end{tabular}
\label{tableEinsnoneA}
\end{center}
\end{table}

Notice that the coefficients of the differential equation (\ref{ODEynoneA}) do not share common zeros or poles at $y = \pm 1$ (see Table~\ref{tableEinsnoneA}). This means that we can directly apply the spectral method to\eqref{ODEynoneA}. To this purpose, it is convenient to rewrite $\mathfrak{S}_1(y)$ and $\mathfrak{S}_0(y)$ as follows
\begin{equation}\label{S210hononeA}
\mathfrak{S}_1(y) = i\Omega \widehat{N}_1(y)+\widehat{N}_0(y), \qquad
\mathfrak{S}_0(y) = \Omega^2 \widehat{C}_2(y)+i\Omega \widehat{C}_1(y)+\widehat{C}_0(y)
\end{equation}
with
\begin{eqnarray}
    \widehat{N}_1(y) &=&(4x_0+1-y)f(y), \quad
    \widehat{N}_0(y) = -2(1-y)f(y)+\frac{(1-y)^2}{2}\dot{f}(y),\label{N0A}\\
    \widehat{C}_2(y) &=&\frac{4x_0^2(1-f(y))}{(1-y)^2}-\left(\frac{2x_0}{1-y}+\frac{1}{4}\right)f(y),\label{C2A}\\
    \widehat{C}_1(y) &=&\frac{1}{2}\left[\left(2x_0+\frac{1-y}{2}\right)\dot{f}(y)-f(y)\right],\label{C1A}\\
    \widehat{C}_0(y) &=&-\frac{4V_{s,\ell}(y)}{(1-y)^2}.\label{C0A}
\end{eqnarray}
It can be easily verified with Maple that
\begin{eqnarray}
    &&\lim_{y\to 1^{-}}\mathfrak{S}_2(y)=0=\lim_{y\to -1^{+}}\mathfrak{S}_2(y),\\
    &&\lim_{y\to 1^{-}}\mathfrak{S}_1(y)=4ix_0\Omega,\quad
    \lim_{y\to -1^{+}}M_1(y)=\Lambda_0,\\
    &&\lim_{y\to 1^{-}}\mathfrak{S}_0(y)=\frac{3}{4}\Omega^2-\ell(\ell+1),\quad
     \lim_{y\to -1^{+}}\mathfrak{S}_0(y)=B_2\Omega^2+i\Omega B_1+B_0,
\end{eqnarray}
where
\begin{eqnarray}
\Lambda_0&=&1-\frac{4x_0^2\mu^3}{\sqrt{\pi}}e^{-\mu^2 x_0^2},\\
B_2 &=& x_0^2,\quad
B_1 =\frac{1+2x_0}{4}\left(1-\frac{4\mu^3 x_0^2}{\sqrt{\pi}}e^{-\mu^2 x_0^2}\right),\\
B_0 &=&\frac{(2s+1)(s-1)}{2}\left(1-\frac{4\mu^3 x_0^2}{\sqrt{\pi}}e^{-\mu^2 x_0^2}\right)-\ell(\ell+1).\label{B0A}
\end{eqnarray}
As a final step prior to implementing the spectral method, we rewrite the differential equation (\ref{ODEynoneA}) in the following form
\begin{equation}\label{TSCHNCW}
  \widehat{L}_0\left[\Phi_{\Omega\ell s}, \dot{\Phi}_{\Omega\ell s}, \ddot{\Phi}_{\Omega\ell s}\right] +  i\widehat{L}_1\left[\Phi_{\Omega\ell s}, \dot{\Phi}_{\Omega\ell s}, \ddot{\Phi}_{\Omega\ell s}\right]\Omega +  \widehat{L}_2\left[\Phi_{\Omega\ell s}, \dot{\Phi}_{\Omega\ell s}, \ddot{\Phi}_{\Omega\ell s}\right]\Omega^2 = 0
\end{equation}
with
\begin{eqnarray}
  \widehat{L}_0\left[\Phi_{\Omega\ell s}, \dot{\Phi}_{\Omega\ell s}, \ddot{\Phi}_{\Omega\ell s}\right] &=& \widehat{L}_{00}(y)\Phi_{\Omega\ell s} + \widehat{L}_{01}(y)\dot{\Phi}_{\Omega\ell s} + \widehat{L}_{02}(y)\ddot{\Phi}_{\Omega\ell s},\label{L0noneM}\\
  \widehat{L}_1\left[\Phi_{\Omega\ell s}, \dot{\Phi}_{\Omega\ell s}, \ddot{\Phi}_{\Omega\ell s}\right] &=& \widehat{L}_{10}(y)\Phi_{\Omega\ell s} + \widehat{L}_{11}(y)\dot{\Phi}_{\Omega\ell s} + \widehat{L}_{12}(y)\ddot{\Phi}_{\Omega\ell s}, \label{L1noneM}\\
  \widehat{L}_2\left[\Phi_{\Omega\ell s}, \dot{\Phi}_{\Omega\ell s}, \ddot{\Phi}_{\Omega\ell s}\right] &=& \widehat{L}_{20}(y)\Phi_{\Omega\ell s} + \widehat{L}_{21}(y)\dot{\Phi}_{\Omega\ell s} + \widehat{L}_{22}(y)\ddot{\Phi}_{\Omega\ell s}.\label{L2noneM}
\end{eqnarray}
Furthermore, Table~\ref{tableZweinone} provides a summary of the $\widehat{L}_{ij}$ terms from (\ref{L0noneM})-(\ref{L2noneM}) along with their limiting values at \(y = \pm 1\).

\begin{table}
\caption{We present the definitions of the coefficients \(\widehat{L}_{ij}\) and their behaviors at the endpoints of the interval \(-1 \leq y \leq 1\). The symbols used in the table are defined in \eqref{S210hononeA} and \eqref{N0A}-\eqref{C0A}.}
\begin{center}
\begin{tabular}{ | c | c | c | c | c | c | c | c }
\hline
$(i,j)$  & $\displaystyle{\lim_{y\to -1^+}}\widehat{L}_{ij}$  & $\widehat{L}_{ij}$ & $\displaystyle{\lim_{y\to 1^-}}\widehat{L}_{ij}$  \\ \hline
$(0,0)$ &  $B_0$          & $\widehat{C}_0$                  & $-\ell(\ell+1)$\\ \hline
$(0,1)$ &  $\Lambda_0$    & $\widehat{N}_0$                  & $0$\\ \hline
$(0,2)$ &  $0$            & $\widehat{\mathfrak{S}}_2$                  & $0$\\ \hline 
$(1,0)$ &  $B_1$          & $\widehat{C}_1$                  & $0$\\ \hline 
$(1,1)$ &  $0$    & $\widehat{N}_1$                  & $4x_0$\\ \hline 
$(1,2)$ &  $0$            & $0$                    & $0$\\ \hline 
$(2,0)$ &  $x_0^2$          & $\widehat{C}_2$                  & $\frac{3}{4}$\\ \hline
$(2,1)$ &  $0$            & $0$                    & $0$\\ \hline
$(2,2)$ &  $0$            & $0$                    & $0$\\ \hline
\end{tabular}
\label{tableZweinone}
\end{center}
\end{table} 

\subsection{Case B}
\label{subBNC}

In this scenario, we select the Frobenius solution corresponding to the exponent $\rho = 1/2$, which is equivalent to the requirement $\psi_{\Omega\ell s}(1) = 0$. Using the following ansatz, which encodes the correct behaviour at the throat and asymptotically at space-like infinity ($z \to +\infty$)
\begin{equation}  
\psi_{\Omega\ell\epsilon}(z) = \sqrt{1 - \frac{1}{z}} \, z^{\frac{i\Omega}{2}} e^{ix_0\Omega(z-1)} U_{\Omega\ell s}(z),  \end{equation}  
and imposing the condition that $U_{\Omega\ell s}(z)$ remains regular as $z \to 1^+$ and $z \to +\infty$, equation \eqref{ODEor} becomes
\begin{equation}\label{ODEznoneB}
    \widehat{\mathfrak{P}}_2(z) \Phi''_{\Omega \ell s}(z) + \widehat{\mathfrak{P}}_1(z) \Phi'_{\Omega \ell s}(z) + \widehat{\mathfrak{P}}_0(z) \Phi_{\Omega \ell s}(z) = 0
\end{equation}
with
\begin{eqnarray}
    \widehat{\mathfrak{P}}_2(z)&=&z^2(z-1)^2 f(z),\\
    \widehat{\mathfrak{P}}_1(z)&=&\frac{z(z-1)}{2}\left[z(z-1)f^{'}(z)+2f(z)+2i\Omega(z-1)(2x_0 z+1)f(z)\right],\\
    \widehat{\mathfrak{P}}_0(z)&=&-\frac{(z-1)^2}{4}\left[(2x_0 z+1)^2 f(z)-4x_0^2 z^2\right]\Omega^2\nonumber\\
    &&+\frac{i\Omega}{4}(z-1)\left[z(z-1)(2x_0 z+1)f^{'}(z)+2f(z)(2x_0 z-z+2)\right]\nonumber\\
    &&+\frac{z(z-1)}{4}f^{'}(z)+\frac{3}{4}f(z)-zf(z)-z^2(z-1)^2 V_{s,\ell}(z).\\
\end{eqnarray}
As in the previous subsection, we apply the transformation $z = 2/(1-y)$, which maps the point at infinity and the throat to $y = 1$ and $y = -1$, respectively. In the following, a dot stands for differentiation with respect to the new variable $y$. As a result, equation \eqref{ODEznoneB} becomes
\begin{equation}\label{ODEynoneB}
    \widehat{\mathfrak{S}}_2(y)\ddot{\Phi}_{\Omega\ell s}(y) + \widehat{\mathfrak{S}}_1(y)\dot{\Phi}_{\Omega\ell s}(y) + \widehat{\mathfrak{S}}_0(y)\Phi_{\Omega\ell s}(y) = 0,
\end{equation}
where
\begin{eqnarray}
  \widehat{\mathfrak{S}}_2(y) &=&(1+y)^2 f(y), \label{S2ononeB} \\
  \widehat{\mathfrak{S}}_1(y) &=&\frac{1+y}{1-y}\left[\frac{1-y^2}{2}f^{'}(y)-(3y+1)f(y)\right]+i\Omega\left(\frac{1+y}{1-y}\right)^2(4x_0+1-y)f(y),\label{S1ononeB}\\
  \widehat{\mathfrak{S}}_0(y) &=&-\frac{1}{4}\left(\frac{1+y}{1-y}\right)^2\left[\left(\frac{4x_0}{1-y}+1\right)^2 f(y)-\frac{16 x_0^2}{(1-y)^2}\right]\Omega^2\nonumber\\
  &&+\frac{i\Omega}{4}\frac{1+y}{(1-y)^2}\left[(1+y)(1+4x_0-y)\dot{f}(y)+4(2x_0-y)f(y)\right]\nonumber\\
  &&+\frac{1+y}{4}\dot{f}(y)-\frac{5+3y}{4(1-y)}f(y)-\frac{4(1+y)^2}{(1-y)^4}V_{s,\ell}(y). \label{S0ononeB}
\end{eqnarray}
In addition, we also require that $\Phi_{\Omega\ell s}(y)$ remains regular at $y=\pm 1$. Notice that $f(y)$ and $V_{s,\ell}(y)$ have been already given in \eqref{fyA} and \eqref{fv1A}.

\begin{table}
\caption{Classification of the points $y=\pm 1$ for the relevant functions defined by  (\ref{S2ononeB})-(\ref{S0ononeB}). The abbreviations $z$ ord $n$ and $p$ ord $m$ stand for zero of order $n$ and pole of order $m$, respectively.}
\begin{center}
\begin{tabular}{ | c | c | c | c | c | c }
\hline
$y$   & $\widehat{\mathfrak{S}}_2(y)$ & $\widehat{\mathfrak{S}}_1(y)$  & $\widehat{\mathfrak{S}}_0(y)$\\ \hline
$-1$  & z \mbox{ord} 3      & z \mbox{ord} 2       & z \mbox{ord} 2 \\ \hline
$+1$  & 4                   & p \mbox{ord} 2       & p \mbox{ord} 2\\ \hline
\end{tabular}
\label{tableEinsnoneB}
\end{center}
\end{table}

We observe that the coefficients of the differential equation (\ref{ODEynoneB}) share a common zero of order two at $y=-1$ while both $\widehat{\mathfrak{S}}_1$ and $\widehat{\mathfrak{S}}_2$ exhibit a pole of order two at $y=1$.  Hence, in order to apply the spectral method, we need to multiply (\ref{ODEynoneB}) by $(1-y)^2/(1+y)^2$. As a result, we end up with the following differential equation
\begin{equation}\label{ODEhynoneB}
    \mathfrak{M}_2(y)\ddot{\Phi}_{\Omega\ell\epsilon}(y) + \mathfrak{M}_1(y)\dot{\Phi}_{\Omega\ell\epsilon}(y) + \mathfrak{M}_0(y)\Phi_{\Omega\ell\epsilon}(y) = 0,
\end{equation}
where
\begin{equation}\label{S210hononeB}
  \mathfrak{M}_2(y) = (1-y)^2f(y), \qquad
  \mathfrak{M}_1(y) = i\Omega \mathfrak{N}_1(y)+\mathfrak{N}_0(y), \qquad
  \mathfrak{M}_0(y) = \Omega^2 \mathfrak{C}_2(y)+i\Omega \mathfrak{C}_1(y)+\mathfrak{C}_0(y)
\end{equation}
with
\begin{eqnarray}
    \mathfrak{N}_1(y) &=&(4x_0+1-y)f(y), \quad
    \mathfrak{N}_0(y) = \frac{1-y}{1+y}\left[\frac{1-y^2}{2}\dot{f}(y)-(3y+1)f(y)\right],\label{N0B}\\
    \mathfrak{C}_2(y) &=& -\frac{1}{4}\left(\frac{4x_0}{1-y}+1\right)^2 f(y)+\frac{4 x_0^2}{(1-y)^2},\label{C2B}\\
    \mathfrak{C}_1(y) &=&\frac{1}{4(1+y)}\left[(1+y)(1+4x_0-y)\dot{f}(y)+4(2x_0-y)f(y)\right],\label{C1B}\\
    \mathfrak{C}_0(y) &=&\left(\frac{1-y}{1+y}\right)^2\left[\frac{1+y}{4}\dot{f}(y)-\frac{5+3y}{4(1-y)}f(y)-\frac{4(1+y)^2}{(1-y)^4}V_{s,\ell}(y)\right].\label{C0B}
\end{eqnarray}
It can be easily verified with Maple that
\begin{eqnarray}
    &&\lim_{y\to 1^{-}}\mathfrak{M}_2(y)=0=\lim_{y\to -1^{+}}\mathfrak{M}_2(y),\\
    &&\lim_{y\to 1^{-}}\mathfrak{M}_1(y)=4ix_0\Omega,\quad
    \lim_{y\to -1^{+}}\mathfrak{M}_1(y)=\widehat{\Lambda}_0,\\
    &&\lim_{y\to 1^{-}}\mathfrak{M}_0(y)=\frac{3}{4}\Omega^2-\ell(\ell+1)+ix_0\Omega,\quad
     \lim_{y\to -1^{+}}\mathfrak{M}_0(y)=\widehat{B}_2\Omega^2+i\Omega\widehat{B}_1+\widehat{B}_0,
\end{eqnarray}
where
\begin{eqnarray}
\widehat{\Lambda}_0&=&3-\frac{12x_0^2\mu^3}{\sqrt{\pi}}e^{-\mu^2 x_0^2},\\
\widehat{B}_2 &=& x_0^2,\quad
\widehat{B}_1 =\frac{3(1+2x_0)}{4}\left(1-\frac{4\mu^3 x_0^2}{\sqrt{\pi}}e^{-\mu^2 x_0^2}\right),\\
\widehat{B}_0 &=&\frac{s(2s-1)}{2}\left(1-\frac{4\mu^3 x_0^2}{\sqrt{\pi}}e^{-\mu^2 x_0^2}\right)+\frac{2x_0^4\mu^5+10x_0^2\mu^3-3\sqrt{\pi}e^{\mu^2 x_0^2}}{2\sqrt{\pi}e^{\mu^2 x_0^2}}-\ell(\ell+1).\label{B0AM}
\end{eqnarray}
As a final step prior to implementing the spectral method, we rewrite the differential equation (\ref{ODEynoneB}) in the following form
\begin{equation}\label{TSCHB}
  \widehat{\widehat{L}}_0\left[\Phi_{\Omega\ell s}, \dot{\Phi}_{\Omega\ell s}, \ddot{\Phi}_{\Omega\ell s}\right] +  i\widehat{\widehat{L}}_1\left[\Phi_{\Omega\ell s}, \dot{\Phi}_{\Omega\ell s}, \ddot{\Phi}_{\Omega\ell s}\right]\Omega +  \widehat{\widehat{L}}_2\left[\Phi_{\Omega\ell s}, \dot{\Phi}_{\Omega\ell s}, \ddot{\Phi}_{\Omega\ell s}\right]\Omega^2 = 0
\end{equation}
with
\begin{eqnarray}
  \widehat{\widehat{L}}_0\left[\Phi_{\Omega\ell s}, \dot{\Phi}_{\Omega\ell s}, \ddot{\Phi}_{\Omega\ell s}\right] &=& \widehat{\widehat{L}}_{00}(y)\Phi_{\Omega\ell s} + \widehat{\widehat{L}}_{01}(y)\dot{\Phi}_{\Omega\ell s} + \widehat{\widehat{L}}_{02}(y)\ddot{\Phi}_{\Omega\ell s},\label{L0noneB}\\
  \widehat{\widehat{L}}_1\left[\Phi_{\Omega\ell s}, \dot{\Phi}_{\Omega\ell s}, \ddot{\Phi}_{\Omega\ell s}\right] &=& \widehat{\widehat{L}}_{10}(y)\Phi_{\Omega\ell s} + \widehat{\widehat{L}}_{11}(y)\dot{\Phi}_{\Omega\ell s} + \widehat{\widehat{L}}_{12}(y)\ddot{\Phi}_{\Omega\ell s}, \label{L1noneB}\\
  \widehat{\widehat{L}}_2\left[\Phi_{\Omega\ell s}, \dot{\Phi}_{\Omega\ell s}, \ddot{\Phi}_{\Omega\ell s}\right] &=& \widehat{\widehat{L}}_{20}(y)\Phi_{\Omega\ell s} + \widehat{\widehat{L}}_{21}(y)\dot{\Phi}_{\Omega\ell s} + \widehat{\widehat{L}}_{22}(y)\ddot{\Phi}_{\Omega\ell s}.\label{L2noneB}
\end{eqnarray}
Furthermore, Table~\ref{tableZweinoneB} provides a summary of the $\widehat{\widehat{L}}_{ij}$ terms from (\ref{L0noneB})-(\ref{L2noneB}) along with their limiting values at \(y = \pm 1\).

\begin{table}
\caption{We present the definitions of the coefficients $\widehat{\widehat{L}}_{ij}$ and their behaviours at the endpoints of the interval $-1 \leq y \leq 1$. The symbols used in the table are defined in \eqref{S210hononeB} and \eqref{N0B}-\eqref{C0B}.}
\begin{center}
\begin{tabular}{ | c | c | c | c | c | c | c | c }
\hline
$(i,j)$  & $\displaystyle{\lim_{y\to -1^+}}\widehat{\widehat{L}}_{ij}$  & $\widehat{\widehat{L}}_{ij}$ & $\displaystyle{\lim_{y\to 1^-}}\widehat{\widehat{L}}_{ij}$  \\ \hline
$(0,0)$ &  $\widehat{B}_0$          & $\mathfrak{C}_0$                  & $-\ell(\ell+1)$\\ \hline
$(0,1)$ &  $\widehat{\Lambda}_0$    & $\mathfrak{N}_0$                  & $0$\\ \hline
$(0,2)$ &  $0$                      & $\mathfrak{M}_2$                  & $0$\\ \hline 
$(1,0)$ &  $\widehat{B}_1$          & $\mathfrak{C}_1$                  & $x_0$\\ \hline 
$(1,1)$ &  $0$    & $\mathfrak{N}_1$                  & $4x_0$\\ \hline 
$(1,2)$ &  $0$            & $0$                    & $0$\\ \hline 
$(2,0)$ &  $x_0^2$          & $\mathfrak{C}_2$                  & $\frac{3}{4}$\\ \hline
$(2,1)$ &  $0$            & $0$                    & $0$\\ \hline
$(2,2)$ &  $0$            & $0$                    & $0$\\ \hline
\end{tabular}
\label{tableZweinoneB}
\end{center}
\end{table} 
We conclude this section with the following theorem, which provides an explanation for the appearance of both positive and negative signs in the real part of the QNMs.
\begin{theorem}
Consider the polynomial eigenvalue problem
\begin{equation}\label{QNMeq}
\mathfrak{L}_0\left[\mathfrak{U}(y), \dot{\mathfrak{U}}(y), \ddot{\mathfrak{U}}(y)\right]
+i\mathfrak{L}_1\left[\mathfrak{U}(y), \dot{\mathfrak{U}}(y), \ddot{\mathfrak{U}}(y)\right]\Omega
+\mathfrak{L}_2\left[\mathfrak{U}(y), \dot{\mathfrak{U}}(y), \ddot{\mathfrak{U}}(y)\right]\Omega^2=0
\end{equation}
on the interval $[-1,1]$ where $\mathfrak{U}(y)$ is complex-valued, $\Omega=\Omega_R+i\Omega_I$ is a complex eigenvalue (QNM), and the dot denotes differentiation with respect to the variable $y\in[-1,1]$. Furthermore, the operators $\mathfrak{L}_i$ with $i\in\{0,1,2\}$ are linear and have the form
\begin{equation}
\mathfrak{L}_i\left[\mathfrak{U}(y), \dot{\mathfrak{U}}(y), \ddot{\mathfrak{U}}(y)\right]=
\mathfrak{L}_{i0}\mathfrak{U}(y)+\mathfrak{L}_{i1}\dot{\mathfrak{U}}(y)+\mathfrak{L}_{i2}\ddot{\mathfrak{U}}(y).
\end{equation}
If $\Omega$ is an eigenvalue of the above problem with the corresponding eigenfunction $\mathfrak{U}(y)$, then $\Omega^{*}=-\Omega_R + i\Omega_I$ is also an eigenvalue of the same problem, associated with the eigenfunction $\overline{\mathfrak{U}(y)}$, which denotes the complex conjugate of $\mathfrak{U}(y)$.
\end{theorem}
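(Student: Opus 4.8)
The plan is to exploit a structural feature that is manifest in every case derived above (Tables~\ref{tableAlt}, \ref{tableAlt2}, \ref{tableZweinone} and \ref{tableZweinoneB}): the coefficient functions $\mathfrak{L}_{ij}(y)$ are \emph{real-valued} on $[-1,1]$. They are assembled solely from $f(y)$, $V_{s,\ell}(y)$ and their first derivatives, together with the real parameters $\mu$, $x_0$, $\ell$ and $s$, none of which introduces a complex factor. The explicit $i$ multiplying $\mathfrak{L}_1$ in \eqref{QNMeq} is thus the \emph{only} imaginary ingredient apart from $\Omega$ itself, and it is precisely this normal form $\mathfrak{L}_0 + i\Omega\,\mathfrak{L}_1 + \Omega^2\mathfrak{L}_2$ that renders the reflection symmetry transparent. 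My first step will therefore be to record the reality of the $\mathfrak{L}_{ij}$ as the crucial hypothesis underpinning the whole argument.

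The heart of the proof is a single conjugation step. Starting from \eqref{QNMeq}, I would apply complex conjugation to the entire equation. Because each $\mathfrak{L}_{ij}$ is real, conjugation commutes with the differential operators, so that $\overline{\mathfrak{L}_i[\mathfrak{U}]} = \mathfrak{L}_i[\,\overline{\mathfrak{U}}\,]$ for $i\in\{0,1,2\}$. Conjugating the three terms and using $\overline{i}=-i$ then yields
\[
\mathfrak{L}_0[\,\overline{\mathfrak{U}}\,] - i\,\overline{\Omega}\,\mathfrak{L}_1[\,\overline{\mathfrak{U}}\,] + \overline{\Omega}^{\,2}\,\mathfrak{L}_2[\,\overline{\mathfrak{U}}\,] = 0 .
\]

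The final step is to recognise this as the original eigenvalue equation evaluated at a reflected frequency. Setting $\Omega^{*} := -\overline{\Omega}$ and observing that for $\Omega=\Omega_R+i\Omega_I$ one has $-\overline{\Omega} = -\Omega_R + i\Omega_I$, the substitution $\overline{\Omega} = -\Omega^{*}$ turns $-i\,\overline{\Omega}$ into $i\,\Omega^{*}$ and $\overline{\Omega}^{\,2}$ into $(\Omega^{*})^2$, so the conjugated equation becomes
\[
\mathfrak{L}_0[\,\overline{\mathfrak{U}}\,] + i\,\Omega^{*}\,\mathfrak{L}_1[\,\overline{\mathfrak{U}}\,] + (\Omega^{*})^2\,\mathfrak{L}_2[\,\overline{\mathfrak{U}}\,] = 0 .
\]
This is exactly \eqref{QNMeq} with the pair $(\Omega,\mathfrak{U})$ replaced by $(\Omega^{*},\overline{\mathfrak{U}})$, establishing that $\Omega^{*}=-\Omega_R+i\Omega_I$ is an eigenvalue with eigenfunction $\overline{\mathfrak{U}(y)}$.

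Since the conjugation of the operator is essentially automatic, no genuine analytic obstacle arises; the one point that deserves care is to confirm that the boundary data accompanying \eqref{QNMeq} are themselves conjugation-invariant. The QNM boundary conditions reduce to requiring regularity of $\mathfrak{U}$ at $y=\pm 1$ (the throat and spatial infinity), and this regularity is preserved verbatim under $\mathfrak{U}\mapsto\overline{\mathfrak{U}}$, so $\overline{\mathfrak{U}}$ inherits the same admissibility. I would conclude by noting that this symmetry accounts for the observed pairing of QNMs with equal imaginary parts and opposite-sign real parts.
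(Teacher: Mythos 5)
Your proof is correct and follows exactly the paper's own argument: the paper's proof likewise consists of applying complex conjugation to \eqref{QNMeq} and noting that all coefficients of the differential operators are real-valued, so that $-\overline{\Omega}=-\Omega_R+i\Omega_I$ is an eigenvalue with eigenfunction $\overline{\mathfrak{U}(y)}$. Your additional observation that the regularity conditions at $y=\pm1$ are invariant under conjugation is a worthwhile detail the paper leaves implicit, but it does not change the substance of the argument.
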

\begin{proof}
The result can be readily obtained by applying the complex conjugate operation to the equation \eqref{QNMeq} and noting that all coefficients of the involved differential operators are real-valued.\hfill \qedsymbol
\end{proof}

\section{Numerical method}

To solve the differential eigenvalue problems \eqref{TSCH}, \eqref{TSCHNCW}, and \eqref{TSCHB} in order to determine the QNMs and their corresponding frequencies $\Omega$, we discretize the differential operators as defined in equations (\ref{L0none})–(\ref{L2none}), (\ref{L0noneM})–(\ref{L2noneM}), and  (\ref{L0noneB})–(\ref{L2noneB}). Since the problem is posed on the finite interval $[-1, 1]$ and only requires that the regular part of the QNM eigenfunction be regular at $y = \pm 1$, a Chebyshev-type spectral method \cite{Trefethen2000, Boyd2000} is a natural choice. Specifically, the function $\Phi_{\Omega \ell s}(y)$ is expanded in a truncated Chebyshev series
\begin{equation}
\Phi_{\Omega \ell s}(y) = \sum_{k=0}^{N} a_k T_k(y),
\end{equation}
where $N \in \mathbb{N}$ is a numerical parameter, $\{a_k\}_{k=0}^{N} \subseteq \mathbb{R}$ are coefficients, and $\{T_k(y)\}_{k=0}^{N}$ are the Chebyshev polynomials of the first kind, defined by $T_k(y) = \cos(k \arccos y)$ for $y \in [-1, 1]$. Substituting this expansion into equations \eqref{TSCH}, \eqref{TSCHNCW}, and \eqref{TSCHB} transforms the problem into an eigenvalue problem with polynomial coefficients. To reformulate it into a numerical framework, the collocation method \cite{Boyd2000} is employed. Instead of requiring the polynomial function in $y$ to vanish identically, this approach enforces vanishing at $N+1$ selected collocation points, which match the number of unknown coefficients $\{a_k\}_{k=0}^{N}$. For these collocation points, we use the Chebyshev roots grid \cite{Fox1968}  
\begin{equation}
y_k = -\cos\left(\frac{(2k+1)\pi}{2(N+1)}\right), \quad k \in \{0, 1, \ldots, N\}.
\end{equation}
An alternative option, also implemented in our codes, is the Chebyshev extrema grid
\begin{equation}
y_k = -\cos\left(\frac{k\pi}{N}\right), \quad k \in \{0, 1, \ldots, N\}.
\end{equation}  
Our numerical computations used the roots grid, though the theoretical performance of both choices is known to be comparable \cite{Fox1968, Boyd2000}. Applying the collocation method results in a classical matrix-based quadratic eigenvalue problem \cite{Tisseur2001}  
\begin{equation}\label{eq:eig}
(M_0 + iM_1\Omega + M_2\Omega^2)\mathbf{a} = \mathbf{0},
\end{equation}
where $M_j$ with $j\in\{0, 1, 2\}$ are square real matrices of size $(N+1) \times (N+1)$ that represent the spectral discretizations of $L_j[\cdot]$, $\widetilde{L}_j[\cdot]$, $\widehat{L}_{j}[\cdot]$, or $\widehat{\widehat{L}}_{j}[\cdot]$. This eigenvalue problem is solved using the \texttt{polyeig} function in \textsc{Matlab}, yielding $2(N+1)$ potential eigenvalues for $\Omega$. To identify the physically meaningful QNM frequencies, root plots for different values of $N$ (e.g., $N = 100, 150, 200$) are overlapped, and consistent roots that remain stable across these values are selected.  To minimize numerical rounding errors and floating-point inaccuracies, all computations were performed using multiple precision arithmetic in \textsc{Maple}, interfaced with \textsc{Matlab} via the \textsc{Advanpix} toolbox \cite{mct2015}. The reported numerical results were calculated with a precision of 200 decimal digits.

\section{Numerical results}\label{Numresults}

In this section, we present our numerical findings for the QNMs of the classical Schwarzschild and the noncommutative geometry-inspired Schwarzschild wormholes, computed using the spectral method described earlier, with a focus on understanding the effects of noncommutativity, and the rescaled mass parameter on the ringdown spectrum. Additionally, we compare these results with the QNMs of the classical Schwarzschild wormhole to understand the degree of influence of the noncommutative parameter in the QNM spectrum.

Tables~\ref{table:1}-\ref{table:1aaextNCSW} provide several significant observations regarding the QNMs of noncommutative geometry-inspired wormholes across scalar, electromagnetic, and vector-type gravitational perturbations for large values of the mass parameter and in the nearly extremal regime. First of all, for both Schwarzschild wormholes and Morris-Thorne (MT) wormholes, significant differences emerge between the QNM spectra under Case I and Case II boundary conditions.  Case II tends to yield higher imaginary components $\Omega_I$, indicating faster decay rates. More specifically, for scalar perturbations and $\ell = 0$, $n = 0$, the imaginary part of the fundamental mode for the Schwarzschild wormhole increases from \(-0.3834i\) (Case I) to \(-1.3226i\) (Case II). Similar trends are observed for higher modes. For instance, for $\ell = 2, n = 3$, $\Omega_I$ increases from $-5.4882i$ (Case I) to $-6.5100i$ (Case II). Analogous considerations can be drawn from Tables~\ref{table:1a} and ~\ref{table:1b}. Hence, case II boundary conditions applied to the classical Schwarzschild wormhole lead to faster damping, likely due to increased energy dissipation at the wormhole's throat. This behaviour signalizes the sensitivity of QNMs to throat-specific conditions. Moreover, the imaginary parts of the QNMs for the MT wormhole are consistently smaller than those of the classical Schwarzschild wormhole across all types of perturbations examined in this work (see Tables~\ref{table:1},~\ref{table:1a}, and ~\ref{table:1b}). This indicates that the ringdown of the QNM spectrum of the MT wormhole is characterized by slower damping, reflecting its less dissipative nature in comparison to the classical Schwarzschild wormhole.

It is interesting to observe that, for large mass parameters ($\mu = 10^3$), the noncommutative Schwarzschild wormhole transitions smoothly into the classical Schwarzschild wormhole. For example, for scalar perturbations, the QNM spectra for the aforementioned wormholes are nearly identical for large $\mu$. Indeed, for $\ell = 0, n = 0$ we have $\Omega_{SW} = 0.5338-0.3834i$ (Case I), and $\Omega_{NCSW} = 0.5338-0.3834i$ (Case A). Moreover, for $\ell = 1, n = 3$ we find $\Omega_{SW} = 0.6712-5.9832i$ (Case I), and $\Omega_{NCSW} = 0.6711-5.9829i$ (Case A). The near-identical QNM spectra for both wormholes in the limit of a large mass parameter (see Table~\ref{table:crossvalidation}) confirm the robustness and consistency of our numerical method. Last but not least, it is gratifying to observe how the noncommutative model converges to its classical limit when $\mu \gg 1$.

In the nearly extremal regime, for $\mu = 1.91$ and $\mu = 1.905$, noncommutative effects become more pronounced, as seen in Table~\ref{table:1extNCSW}, \ref{table:1aextNCSW}, and \ref{table:1aaextNCSW}. More precisely, the noncommutative Schwarzschild wormhole exhibits distinct QNM spectra with respect to the classical Schwarzschild and the MT wormholes. Higher damping rates characterize both cases A and B for fixed $\ell$ and increasing $n$. Also, in this case, such behaviour reflects the influence of the throat's nearly extreme geometry. Hence, we conclude that noncommutative geometry introduces distinct spectral signatures that become more pronounced near extremality, providing a potential observational window for testing such models. Furthermore, the QNM frequencies for scalar perturbations tend to display a slower decay (smaller $|\Omega_I|$) compared to electromagnetic and gravitational perturbations. 

Last but not least, our analysis reveals that no overdamped modes were detected for the noncommutative geometry-inspired wormhole, indicating stability under the considered perturbations. Interestingly, this stability extends to the classical limit as well, where the QNMs of the noncommutative wormhole converge to those of a classical Schwarzschild wormhole with $g_{tt} = -1$, rather than the Schwarzschild form $g_{tt} = -(1 - 2M/r)$. It is important to emphasize that this distinction sets our work apart from the well-known results of \cite{Fuller1962PR}, who demonstrated that the Einstein-Rosen bridge, characterized by $g_{tt} = -(1 - 2M/r)$, is dynamically unstable. In our case, the difference in the time component of the metric ensures that these classical instability results do not directly apply. Consequently, our numerical analysis confirms that both the noncommutative wormhole and its classical limit exhibit stable oscillatory QNMs without collapse, offering new insights into the stability properties of wormholes within the framework of noncommutative geometry.

\begin{table}
\centering
\caption{The table below presents the QNMs for scalar perturbations ($s = 0$) of the Schwarzschild wormhole, alongside those of the Morris-Thorne wormhole, as computed using the Spectral Method by \cite{Batic2024CGG}. The last two columns display values obtained via the Spectral Method, employing 200 polynomials to achieve a precision of 200 digits. "Case I" and "Case II" correspond to the sets of QNM boundary conditions described in subsections~\ref{subA} and \ref{subB}, respectively. Here, $\Omega = 2M\omega$ represents the dimensionless frequency. The abbreviations 'SW' and 'MT' stand for 'Schwarzschild Wormhole' and 'Morris-Thorne Wormhole', respectively.}
\label{table:1}
\vspace*{1em}
\begin{tabular}{||c@{\hspace{0.5em}}|@{\hspace{0.5em}}c@{\hspace{0.5em}}|@{\hspace{0.5em}}c@{\hspace{0.5em}}|@{\hspace{0.5em}}c@{\hspace{0.5em}}|@{\hspace{0.5em}}c@{\hspace{0.5em}}||}
\Xhline{2pt}
\rule{0pt}{1.0\normalbaselineskip}
$\ell$ & $n$ & $\Omega_{SW}$ \text{(Case I)} & $\Omega_{SW}$ \text{(Case II)} & $\Omega_{MT}$ \cite{Batic2024CGG}  \\ [0.25em]
\Xhline{2pt}\hline
\rule{0pt}{1.15\normalbaselineskip}
$0$ & $0$ & $0.5338-0.3834i$ & $0.3719-1.3226i$ & $0.6814 - 0.6178i$ \\
    & $1$ & $0.3003-2.3533i$ & $0.2695-3.3777i$ & $0.4672 - 2.1765i$ \\ [0.5ex]
\hline
\rule{0pt}{1.15\normalbaselineskip}
$1$ & $0$ & $1.5106-0.3575i$ & $1.3927-1.1046i$ & $1.5727 - 0.5297i$ \\
    & $1$ & $1.1938-1.9457i$ & $0.9944-2.8990i$ & $1.2558 - 1.7025i$ \\
    & $2$ & $0.8460-3.9147i$ & $0.7436-4.9483i$ & $0.8368 - 3.2361i$ \\ 
    & $3$ & $0.6712-5.9832i$ & $0.6175-7.0149i$ & $0.6334 - 4.9344i$ \\ [0.5ex]
\hline
\rule{0pt}{1.15\normalbaselineskip}
$2$ & $0$ & $2.5063-0.3550i$ & $2.4327-1.0764i$ & $2.5467 - 0.5127i$ \\
    & $1$ & $2.2915-1.8328i$ & $2.0991-2.6477i$ & $2.3450 - 1.5725i$ \\
    & $2$ & $1.8852-3.5353i$ & $1.6828-4.4895i$ & $1.9478 - 2.7604i$ \\
    & $3$ & $1.5113-5.4882i$ & $1.3730-6.5100i$ & $1.4533 - 4.2052i$ \\ [0.5ex]
\hline
\rule{0pt}{1.15\normalbaselineskip}
$3$ & $0$ & $3.5045-0.3543i$ & $3.4514-1.0687i$ & $3.5343 - 0.5069i$ \\
    & $1$ & $3.3472-1.8007i$ & $3.1966-2.5628i$ & $3.3901 - 1.5372i$ \\
    & $2$ & $3.0089-3.3671i$ & $2.7986-4.2224i$ & $3.0999 - 2.6232i$ \\
    & $3$ & $2.5837-5.1305i$ & $2.3805-6.0850i$ & $2.6722 - 3.8235i$ \\ [0.5ex]
\hline
\rule{0pt}{1.15\normalbaselineskip}
$4$ & $0$ & $4.5035-0.3540i$ & $4.4620-1.0655i$ & $4.5271 - 0.5043i$ \\
    & $1$ & $4.3800-1.7876i$ & $4.2594-2.5277i$ & $4.4151 - 1.5227i$ \\
    & $2$ & $4.1038-3.2933i$ & $3.9191-4.0920i$ & $4.1899 - 2.5722i$ \\
    & $3$ & $3.7139-4.9293i$ & $3.4988-5.8083i$ & $3.8511 - 3.6818i$ \\ [1ex] 
 \Xhline{2pt}
 \end{tabular}
\end{table}

\begin{table}
\centering
\caption{The table below presents the QNMs for electromagnetic perturbations ($s = 1$) of the Schwarzschild wormhole, alongside those of the Morris-Thorne wormhole, as computed using the Spectral Method by \cite{Batic2024CGG}. The last two columns display values obtained via the Spectral Method, employing 200 polynomials to achieve a precision of $200$ digits. "Case I" and "Case II" correspond to the sets of QNM boundary conditions described in subsections~\ref{subA} and \ref{subB}, respectively. Here, $\Omega = 2M\omega$ represents the dimensionless frequency. The abbreviations 'SW' and 'MT' stand for 'Schwarzschild Wormhole' and 'Morris-Thorne Wormhole', respectively.}
\label{table:1a}
\vspace*{1em}
\begin{tabular}{||c@{\hspace{0.5em}}|@{\hspace{0.5em}}c@{\hspace{0.5em}}|@{\hspace{0.5em}}c@{\hspace{0.5em}}|@{\hspace{0.5em}}c@{\hspace{0.5em}}|@{\hspace{0.5em}}c@{\hspace{0.5em}}||}
\Xhline{2pt}
\rule{0pt}{1.0\normalbaselineskip}
$\ell$ & $n$ & $\Omega_{SW}$ \text{(Case I)} & $\Omega_{SW}$ \text{(Case II)} & $\Omega_{MT}$ \cite{Batic2024CGG}  \\ [0.25em]
\Xhline{2pt}\hline
\rule{0pt}{1.15\normalbaselineskip}
$1$ & $0$ & $1.3412-0.3376i$ & $1.2120-1.0462i$ & $1.2665 - 0.4440i$ \\
    & $1$ & $0.9907-1.8583i$ & $0.7695-2.7965i$ & $0.9407 - 1.4136i$ \\
    & $2$ & $0.6061-3.8014i$ & $0.4914-4.8241i$ & $0.3991 - 2.7252i$ \\ [0.5em]
\hline
\rule{0pt}{1.15\normalbaselineskip}
$2$ & $0$ & $2.4058-0.3479i$ & $2.3300-1.0552i$ & $2.3549 - 0.4777i$ \\
    & $1$ & $2.1842-1.7983i$ & $1.9849-2.6018i$ & $2.1592 - 1.4631i$ \\
    & $2$ & $1.7631-3.4810i$ & $1.5538-4.4293i$ & $1.7689 - 2.5602i$ \\
    & $3$ & $1.3770-5.4232i$ & $1.2345-6.4403i$ & $1.2591 - 3.8905i$ \\ [0.5em]
\hline
\rule{0pt}{1.15\normalbaselineskip}
$3$ & $0$ & $3.4329-0.3507i$ & $3.3790-1.0578i$ & $3.3949 - 0.4881i$ \\
    & $1$ & $3.2732-1.7828i$ & $3.1202-2.5382i$ & $3.2542 - 1.4797i$ \\
    & $2$ & $2.9291-3.3365i$ & $2.7150-4.1868i$ & $2.9706 - 2.5232i$ \\
    & $3$ & $2.4963-5.0911i$ & $2.2898-6.0425i$ & $2.5513 - 3.6734i$ \\ [0.5em]
\hline
\rule{0pt}{1.15\normalbaselineskip}
$4$ & $0$ & $4.4478-0.3518i$ & $4.4061-1.0589i$ & $4.4177 - 0.4926i$ \\
    & $1$ & $4.3233-1.7767i$ & $4.2016-2.5126i$ & $4.3078 - 1.4872i$ \\
    & $2$ & $4.0444-3.2743i$ & $3.8578-4.0692i$ & $4.0864 - 2.5118i$ \\
    & $3$ & $3.6504-4.9033i$ & $3.4329-5.7796i$ & $3.7534 - 3.5941i$ \\ [1ex]
 \Xhline{2pt}
 \end{tabular}
\end{table}

\begin{table}
\centering
\caption{The table below presents the QNMs for vector-type gravitational perturbations ($s = 2$) of the Schwarzschild wormhole, alongside those of the Morris-Thorne wormhole, as computed using the Spectral Method by \cite{Batic2024CGG}. The last two columns display values obtained via the Spectral Method, employing 200 polynomials to achieve a precision of 200 digits. "Case I" and "Case II" correspond to the sets of QNM boundary conditions described in subsections~\ref{subA} and \ref{subB}, respectively. Here, $\Omega = 2M\omega$ represents the dimensionless frequency. The abbreviations 'SW' and 'MT' stand for 'Schwarzschild Wormhole' and 'Morris-Thorne Wormhole', respectively.}
\label{table:1b}
\vspace*{1em}
\begin{tabular}{||c@{\hspace{0.5em}}|@{\hspace{0.5em}}c@{\hspace{0.5em}}|@{\hspace{0.5em}}c@{\hspace{0.5em}}|@{\hspace{0.5em}}c@{\hspace{0.5em}}|@{\hspace{0.5em}}c@{\hspace{0.5em}}||}
\Xhline{2pt}
\rule{0pt}{1.0\normalbaselineskip}
$\ell$ & $n$ & $\Omega_{SW}$ \text{(Case I)} & $\Omega_{SW}$ \text{(Case II)} & $\Omega_{MT}$ \cite{Batic2024CGG}  \\ [0.25em]
\Xhline{2pt}\hline
\rule{0pt}{1.15\normalbaselineskip}
$2$ & $0$ & $1.8360-0.2833i$ & $1.7773-0.8694i$ & $1.7377 - 0.3051i$ \\
    & $1$ & $1.6551-1.5123i$ & $1.4762-2.2450i$ & $1.7203 - 1.0396i$ \\
    & $2$ & $1.2744-3.0849i$ & $1.0934-4.0150i$ & $1.5249 - 2.0305i$ \\
    & $3$ & $0.9524-4.9973i$ & $0.8472-6.0029i$ & $1.1699 - 3.3340i$ \\
    & $4$ & $0.7680-7.0175i$ & $0.7070-8.0350i$ & $0.8819 - 4.8985i$ \\ [0.5em]
\hline
\rule{0pt}{1.15\normalbaselineskip}
$3$ & $0$ & $3.0513-0.3264i$ & $2.9968-0.9855i$ & $2.9524 - 0.4100i$ \\
    & $1$ & $2.8891-1.6638i$ & $2.7325-2.3752i$ & $2.8625 - 1.2591i$ \\
    & $2$ & $2.5363-3.1344i$ & $2.3167-3.9525i$ & $2.6606 - 2.1909i$ \\
    & $3$ & $2.0949-4.8319i$ & $1.8897-5.7637i$ & $2.3313 - 3.2596i$ \\
    & $4$ & $1.7113-6.7330i$ & $1.5611-7.7255i$ & $1.9109 - 4.5290i$ \\ [0.5em]
\hline
\rule{0pt}{1.15\normalbaselineskip}
$4$ & $0$ & $4.1589-0.3388i$ & $4.1160-1.0200i$ & $4.0763 - 0.4491i$ \\
    & $1$ & $4.0310-1.7122i$ & $3.9057-2.4230i$ & $3.9846 - 1.3592i$ \\
    & $2$ & $3.7436-3.1607i$ & $3.5507-3.9334i$ & $3.7971 - 2.3060i$ \\
    & $3$ & $3.3362-4.7478i$ & $3.1118-5.6073i$ & $3.5086 - 3.3190i$ \\
    & $4$ & $2.8898-6.5102i$ & $2.6804-7.4508i$ & $3.1230 - 4.4367i$ \\ [0.5em]
\hline
\rule{0pt}{1.15\normalbaselineskip}
$5$ & $0$ & $5.2240-0.3442i$ & $5.1891-1.0349i$ & $5.1548 - 0.4673i$ \\
    & $1$ & $5.1198-1.7329i$ & $5.0168-2.4430i$ & $5.0729 - 1.4086i$ \\
    & $2$ & $4.8821-3.1705i$ & $4.7182-3.9208i$ & $4.9075 - 2.3712i$ \\
    & $3$ & $4.5294-4.6993i$ & $4.3215-5.5102i$ & $4.6566 - 3.3722i$ \\
    & $4$ & $4.1017-6.3565i$ & $3.8783-7.2388i$ & $4.3203 - 4.4337i$ \\ [1ex]
 \Xhline{2pt}
 \end{tabular}
\end{table}

\begin{table}
\centering
\caption{This table presents the QNMs for scalar perturbations ($s = 0$) of the classic Schwarzschild wormhole and a noncommutative geometry-inspired Schwarzschild wormhole with a large mass parameter $\mu = 10^3$, where the throat is located at $x_0 = 1$. The quasinormal frequencies were computed using the Spectral Method, employing 200 polynomials with a precision of 200 digits. As expected, the noncommutative Schwarzschild wormhole transitions into the classic Schwarzschild wormhole in the limit $\mu \gg 1$.   Columns three and four correspond to "Case I" and "Case II," representing the QNM boundary conditions described in subsections~\ref{subA} and \ref{subB}, respectively. Similarly, "Case A" and "Case B" refer to the boundary conditions outlined in subsections~\ref{subANC} and \ref{subBNC}. Here, $\Omega = 2M\omega$ denotes the dimensionless frequency. The abbreviations "SW" and "NCSW" stand for "Schwarzschild Wormhole" and "noncommutative geometry-inspired Schwarzschild Wormhole," respectively.}
\label{table:crossvalidation}
\vspace*{1em}
\begin{tabular}
{||c@{\hspace{0.5em}}|@{\hspace{0.5em}}c@{\hspace{0.5em}}|@{\hspace{0.5em}}c@{\hspace{0.5em}}|@{\hspace{0.5em}}c@{\hspace{0.5em}}|@{\hspace{0.5em}}c@{\hspace{0.5em}}|@{\hspace{0.5em}}c@{\hspace{0.5em}}||}
\Xhline{2pt}
\rule{0pt}{1.0\normalbaselineskip}
$\ell$ & $n$ & $\Omega_{SW}$ \text{(Case I)} & $\Omega_{SW}$ \text{(Case II)} & $\Omega_{NCSW}$ \text{Case A} &  $\Omega_{NCSW}$ \text{Case B} \\ [0.25em]
\Xhline{2pt}\hline
\rule{0pt}{1.15\normalbaselineskip}
$0$ & $0$ & $0.5338-0.3834i$ & $0.3719-1.3226i$ & $0.5338-0.3834i$ & $0.3719-1.3226i$ \\
    & $1$ & $0.3003-2.3533i$ & $0.2695-3.3777i$ & $0.3005-2.3530i$ & $0.2679-3.3748i$ \\ [0.5em]
\hline
\rule{0pt}{1.15\normalbaselineskip}
$1$ & $0$ & $1.5106-0.3575i$ & $1.3927-1.1046i$ & $1.5106-0.3575i$ & $1.3927-1.1046i$ \\
    & $1$ & $1.1938-1.9457i$ & $0.9944-2.8990i$ & $1.1938-1.9457i$ & $0.9944-2.8990i$ \\
    & $2$ & $0.8460-3.9147i$ & $0.7436-4.9483i$ & $0.8460-3.9147i$ & $0.7436-4.9483i$ \\ 
    & $3$ & $0.6712-5.9832i$ & $0.6175-7.0149i$ & $0.6711-5.9829i$ & $0.6184-7.0158i$ \\ [0.5em]
\hline
\rule{0pt}{1.15\normalbaselineskip}
$2$ & $0$ & $2.5063-0.3550i$ & $2.4327-1.0764i$ & $2.5063-0.3550i$ & $2.4327-1.0764i$ \\
    & $1$ & $2.2915-1.8328i$ & $2.0991-2.6477i$ & $2.2915-1.8328i$ & $2.0991-2.6477i$ \\
    & $2$ & $1.8852-3.5353i$ & $1.6828-4.4895i$ & $1.8852-3.5353i$ & $1.6828-4.4895i$ \\
    & $3$ & $1.5113-5.4882i$ & $1.3730-6.5100i$ & $1.5113-5.4882i$ & $1.3730-6.5100i$ \\ [0.5em]
\hline
\rule{0pt}{1.15\normalbaselineskip}
$3$ & $0$ & $3.5045-0.3543i$ & $3.4514-1.0687i$ & $3.5045-0.3543i$ & $3.4514-1.0687i$ \\
    & $1$ & $3.3472-1.8007i$ & $3.1966-2.5628i$ & $3.3472-1.8007i$ & $3.1966-2.5628i$ \\
    & $2$ & $3.0089-3.3671i$ & $2.7986-4.2224i$ & $3.0089-3.3671i$ & $2.7986-4.2224i$ \\
    & $3$ & $2.5837-5.1305i$ & $2.3805-6.0850i$ & $2.5837-5.1305i$ & $2.3805-6.0850i$ \\ [0.5em]
\hline
\rule{0pt}{1.15\normalbaselineskip}
$4$ & $0$ & $4.5035-0.3540i$ & $4.4620-1.0655i$ & $4.5035-0.3540i$ & $4.4620-1.0655i$ \\
    & $1$ & $4.3800-1.7876i$ & $4.2594-2.5277i$ & $4.3800-1.7876i$ & $4.2594-2.5277i$ \\
    & $2$ & $4.1038-3.2933i$ & $3.9191-4.0920i$ & $4.1038-3.2933i$ & $3.9191-4.0920i$ \\
    & $3$ & $3.7139-4.9293i$ & $3.4988-5.8083i$ & $3.7139-4.9293i$ & $3.4988-5.8083i$ \\ [1ex] 
 \Xhline{2pt}
 \end{tabular}
\end{table}

\begin{table}
\centering
\caption{QNMs for scalar perturbations ($s = 0$) of the noncommutative geometry-inspired Schwarzschild wormhole in the nearly extremal case are presented for various values of the mass parameter $\mu$. These numerical values were computed using the spectral method with $200$ polynomials and an accuracy of $200$ digits. "Case A" and "Case B" refer to the boundary conditions introduced in subsections~\ref{subANC} and \ref{subBNC}. Here, $\Omega = 2M\omega$ denotes the dimensionless frequency. Moreover, the mass parameter values $\mu = 1.91$ and $\mu = 1.905$ correspond to throat locations at $x_0 = 0.83098$ and $x_0 = 0.80849$, respectively.}
\label{table:1extNCSW}
\vspace*{1em}
\begin{tabular}{||c@{\hspace{0.5em}}|@{\hspace{0.5em}}c@{\hspace{0.5em}}|@{\hspace{0.5em}}c@{\hspace{0.5em}}|@{\hspace{0.5em}}c@{\hspace{0.5em}}|@{\hspace{0.5em}}c@{\hspace{0.5em}}|@{\hspace{0.5em}}c@{\hspace{0.5em}}||} 
\Xhline{2pt}
$\ell$ & 
$n$    & 
$\Omega$, $\mu = 1.91$ \text{Case A}  & 
$\Omega$, $\mu = 1.91$ \text{Case B}  &
$\Omega$, $\mu = 1.905$ \text{Case A} &
$\Omega$, $\mu = 1.905$ \text{Case B} \\ [0.5ex] 
\Xhline{2pt}\hline
\rule{0pt}{1.15\normalbaselineskip}
$0$ & $0$ & $0.4894-0.1249i$ & $0.8450-0.4863i$ & $0.4280-0.0722i$ & $0.7690-0.3003i$ \\
    & $1$ & $1.3233-0.8988i$ & $1.8444-1.2465i$ & $1.1738-0.5829i$ & $1.6117-0.8343i$ \\ [0.5em]
\hline
\rule{0pt}{1.15\normalbaselineskip}
$1$ & $0$ & $1.7553-0.1381i$ & $1.8743-0.4341i$ & $1.7876-0.0900i$ & $1.8911-0.2857i$ \\
    & $1$ & $2.0976-0.7637i$ & $2.4115-1.1043i$ & $2.0734-0.5052i$ & $2.3205-0.7344i$ \\
    & $2$ & $2.7889-1.4281i$ & $3.1949-1.7263i$ & $2.6173-0.9614i$ & $2.9469-1.1797i$ \\ 
    & $3$ & $3.6057-2.0054i$ & $4.0097-2.2741i$ & $3.2943-1.3886i$ & $3.6493-1.5908i$ \\
    & $4$ & $0.9003-2.4634i$ & $0.9003-2.4633i$ & $4.0060-1.7898i$ & $4.3606-1.9892i$ \\ [0.5em]
\hline
\rule{0pt}{1.15\normalbaselineskip}
$2$ & $0$ & $2.9761-0.1471i$ & $3.0386-0.4477i$ & $3.0520-0.0960i$ & $3.1142-0.2980i$ \\
    & $1$ & $3.1562-0.7586i$ & $3.3275-1.0701i$ & $3.2215-0.5171i$ & $3.3617-0.7435i$ \\
    & $2$ & $3.5563-1.3725i$ & $3.8376-1.6637i$ & $3.5327-0.9631i$ & $3.7404-1.1686i$ \\
    & $3$ & $4.1551-1.9471i$ & $4.4900-2.2266i$ & $3.9866-1.3649i$ & $4.2637-1.5603i$ \\
    & $4$ & $1.6527-2.7014i$ & $1.6527-2.7014i$ & $4.5602-1.7591i$ & $4.8678-1.9634i$ \\ [1ex]
 \Xhline{2pt}
 \end{tabular}
\end{table}

\begin{table}
\centering
\caption{QNMs for electromagnetic perturbations ($s = 1$) of the noncommutative geometry-inspired Schwarzschild wormhole in the nearly extremal case are presented for various values of the mass parameter $\mu$. These numerical values were computed using the spectral method with $200$ polynomials and an accuracy of $200$ digits. "Case A" and "Case B" refer to the boundary conditions introduced in subsections~\ref{subANC} and \ref{subBNC}. Here, $\Omega = 2M\omega$ denotes the dimensionless frequency. Moreover, the mass parameter values $\mu = 1.91$ and $\mu = 1.905$ correspond to throat locations at $x_0 = 0.83098$ and $x_0 = 0.80849$, respectively.} 
\label{table:1aextNCSW}
\vspace*{1em}
\begin{tabular}{||c@{\hspace{0.5em}}|@{\hspace{0.5em}}c@{\hspace{0.5em}}|@{\hspace{0.5em}}c@{\hspace{0.5em}}|@{\hspace{0.5em}}c@{\hspace{0.5em}}|@{\hspace{0.5em}}c@{\hspace{0.5em}}|@{\hspace{0.5em}}c@{\hspace{0.5em}}||}  
\Xhline{2pt}
$\ell$ & 
$n$    & 
$\Omega$, $\mu = 1.91$ \text{Case A}  & 
$\Omega$, $\mu = 1.91$ \text{Case B} &
$\Omega$, $\mu = 1.905$  \text{Case A} &
$\Omega$, $\mu = 1.905$ \text{Case B} \\ [0.5ex]  
\Xhline{2pt}\hline
\rule{0pt}{1.15\normalbaselineskip}
$1$ & $0$ & $1.7213-0.1658i$ & $1.8205-0.5195i$ & $1.7754-0.1093i$ & $1.8758-0.3472i$ \\
    & $1$ & $2.0064-0.9139i$ & $2.2726-1.3300i$ & $2.0496-0.6168i$ & $2.2810-0.9026i$ \\
    & $2$ & $2.6072-1.7375i$ & $2.9840-2.1154i$ & $2.5591-1.1894i$ & $2.8708-1.4669i$ \\
    & $3$ & $3.3755-2.4614i$ & $3.7624-2.7833i$ & $3.2032-1.7313i$ & $3.5457-1.9834i$ \\ [0.5em]
\hline
\rule{0pt}{1.15\normalbaselineskip}
$2$ & $0$ & $2.9606-0.1557i$ & $3.0246-0.4759i$ & $3.0461-0.1012i$ & $3.1108-0.3140i$ \\
    & $1$ & $3.1450-0.8154i$ & $3.3158-1.1720i$ & $3.2255-0.5463i$ & $3.3795-0.7939i$ \\
    & $2$ & $3.5338-1.5371i$ & $3.7934-1.9003i$ & $3.5673-1.0502i$ & $3.7851-1.3093i$ \\
    & $3$ & $4.0837-2.2544i$ & $4.3908-2.5986i$ & $4.0285-1.5668i$ & $4.2930-1.8206i$ \\ [1ex]
 \Xhline{2pt}
 \end{tabular}
\end{table}

\begin{table}
\centering
\caption{QNMs for vector-type gravitational perturbations ($s = 2$) of the noncommutative geometry-inspired Schwarzschild wormhole in the nearly extremal case are presented for various values of the mass parameter $\mu$. These numerical values were computed using the spectral method with $200$ polynomials and an accuracy of $200$ digits. "Case A" and "Case B" refer to the boundary conditions introduced in subsections~\ref{subANC} and \ref{subBNC}. Here, $\Omega = 2M\omega$ denotes the dimensionless frequency.  Moreover, the mass parameter values $\mu = 1.91$ and $\mu = 1.905$ correspond to throat locations at $x_0 = 0.83098$ and $x_0 = 0.80849$, respectively.} 
\label{table:1aaextNCSW}
\vspace*{1em}
\begin{tabular}{||c@{\hspace{0.5em}}|@{\hspace{0.5em}}c@{\hspace{0.5em}}|@{\hspace{0.5em}}c@{\hspace{0.5em}}|@{\hspace{0.5em}}c@{\hspace{0.5em}}|@{\hspace{0.5em}}c@{\hspace{0.5em}}|@{\hspace{0.5em}}c@{\hspace{0.5em}}||}  
\Xhline{2pt}
$\ell$ & 
$n$    & 
$\Omega$, $\mu = 1.91$ \text{Case A}  & 
$\Omega$, $\mu = 1.91$ \text{Case B}  &
$\Omega$, $\mu = 1.905$ \text{Case A} &
$\Omega$, $\mu = 1.905$ \text{Case B} \\ [0.5ex] 
\Xhline{2pt}\hline
\rule{0pt}{1.15\normalbaselineskip}
$2$ & $0$ & $2.8837-0.1922i$ & $2.9695-0.5732i$ & $3.0155-0.1239i$ & $3.0854-0.3760i$ \\
    & $1$ & $3.1469-0.9287i$ & $1.7596-1.1160i$ & $3.2090-0.6267i$ & $3.3777-0.8535i$ \\
    & $2$ & $1.7592-1.1159i$ & $3.4056-1.2388i$ & $3.5967-1.0493i$ & $1.7533-1.1161i$ \\
    & $3$ & $3.7223-1.5163i$ & $4.0717-1.7811i$ & $1.7533-1.1161i$ & $3.8637-1.2307i$ \\
    & $4$ & $4.4343-2.0431i$ & $4.7978-2.3072i$ & $4.1619-1.4125i$ & $4.4764-1.5985i$ \\
    & $5$ & $0.6906-2.3647i$ & $0.6906-2.3647i$ & $4.7993-1.7893i$ & $5.1267-1.9862i$ \\ [0.5em]
\hline
\rule{0pt}{1.15\normalbaselineskip}
$3$ & $0$ & $4.1229-0.1724i$ & $4.1723-0.5210i$ & $4.2749-0.1112i$ & $4.3254-0.3398i$ \\
    & $1$ & $4.2661-0.8739i$ & $4.3974-1.2137i$ & $4.4168-0.5819i$ & $4.5356-0.8374i$ \\
    & $2$ & $4.5702-1.5130i$ & $2.8092-1.6664i$ & $4.6589-1.0974i$ & $4.7719-1.3142i$ \\
    & $3$ & $2.8092-1.6664i$ & $4.8003-1.7730i$ & $4.9354-1.4604i$ & $5.1850-1.6002i$ \\
    & $4$ & $5.0835-2.0252i$ & $5.3933-2.2907i$ & $2.8002-1.6651i$ & $2.8002-1.6651i$ \\
    & $5$ & $5.7094-2.5759i$ & $1.4354-2.7299i$ & $5.4711-1.7755i$ & $5.7675-1.9707i$ \\
    & $6$ & $1.4354-2.7299i$ & $6.0162-2.8977i$ & $6.0739-2.1787i$ & $6.3957-2.3975i$ \\ [1ex]
 \Xhline{2pt}
 \end{tabular}
\end{table}

\section{Conclusions and outlook}

This study presented a comprehensive analysis of QNMs for classical Schwarzschild and noncommutative geometry-inspired wormholes, exploring scalar, electromagnetic, and vector-type gravitational perturbations. By employing a spectral method based on Chebyshev polynomials, we achieved high precision and numerical efficiency, enabling a robust characterization of QNM spectra across various perturbation types.  Our main findings include the absence of overdamped modes and the predominance of oscillatory QNMs, reflecting the linear stability of the investigated wormholes. Moreover, we observed a smooth transition of QNMs from the noncommutative wormhole to the classical Schwarzschild wormhole in the large mass parameter regime, underscoring the accuracy and validity of the computational framework we employed in the present work. Furthermore, our study also revealed significant deviations in the QNM spectra due to noncommutative effects, which manifest in both the real and imaginary parts of the frequencies. 

We have demonstrated that noncommutative effects become most pronounced in the nearly extremal regime, specifically for mass parameter values close to the extremal threshold $\mu_e = 1.904$. In this regime, noncommutative geometry introduces distinct QNM spectra compared to classical Schwarzschild and Morris-Thorne wormholes. Tables XI, XII, and XIII in the manuscript highlight that for values such as $\mu = 1.91$ and $\mu = 1.905$, corresponding to throat locations at $x_0 = 0.83098$ and $x_0 = 0.80849$ respectively, the noncommutative QNM frequencies exhibit significantly different damping rates and oscillatory behaviors.

More precisely, higher damping rates are observed for scalar, electromagnetic, and vector-type gravitational perturbations as one approaches extremality, reflecting the influence of the nearly extreme geometry on the QNM spectra. Conversely, in the limit of large mass parameters ($\mu \gg 1$), noncommutative effects become negligible, and the QNM spectra converge smoothly to those of the classical Schwarzschild wormhole.

However, it is important to note that masses close to the extremal case are likely associated with microscopic wormholes, making their detection through gravitational waves on a cosmic scale improbable. Instead, such objects may be more relevant to future high-energy experiments, such as those conducted in the past at the LHC (Large Hadron Collider), where microscopic noncommutative structures could manifest. This observation places constraints on the direct astrophysical detectability of noncommutative wormholes, while simultaneously highlighting a potential avenue for their exploration in particle physics experiments.

Future research will focus on extending our method to rotating wormholes and higher-dimensional spacetime frameworks. For rotating wormholes, the inclusion of frame-dragging effects in the perturbation equations will present both conceptual and computational challenges, as additional off-diagonal terms in the perturbation operators must be accounted for within the spectral framework. However, the use of Chebyshev polynomials and spectral methods remains a promising approach to achieving accurate and stable numerical results in such scenarios.  The extension to higher-dimensional wormholes will require incorporating additional angular coordinates and modifying the perturbation potentials accordingly. Although this will increase the computational complexity, our spectral method is well-suited for handling such high-dimensional eigenvalue problems, provided adequate computational resources are available. Regarding quantum-gravity corrections, we acknowledge that these effects could indeed be significant, depending on the scale of noncommutativity. As already mentioned in the present work, the noncommutative parameter $\theta$ introduces a minimal length scale that could mitigate singularities, and such smearing effects are already a form of quantum-gravity correction. However, more fundamental quantum-gravity effects, such as those arising from loop quantum gravity or string theory, may introduce corrections that either enhance or suppress the deviations we observed in the QNM spectra. Investigating the interplay between these corrections and noncommutative geometry is an open challenge that we plan to address in future endevours.

\section*{Code availability}

All analytical calculations presented in this document have been verified using the computer algebra system \textsc{Maple}. For transparency and reproducibility, we have included four \textsc{Maple} worksheets that correspond to the analyses conducted in Sections~\ref{WS} and ~\ref{NCW} within the supplementary materials. The discretization of differential operators \eqref{TSCH}, \eqref{TSCHNCW}, and \eqref{TSCHB} using the Chebyshev-type spectral method is equally performed in \textsc{Maple} computer algebra system. Finally, the numerical resolution of the derived quadratic eigenvalue problems, denoted by equation \eqref{eq:eig}, is executed in the \textsc{Matlab} environment utilizing the \texttt{polyeig} function. Access to all these resources is provided through the following \texttt{GitHub} repository link, ensuring that interested parties can freely review and utilize the computational methodologies employed in our study

\begin{itemize}
    \item \url{https://github.com/dutykh/ncwh/}
\end{itemize}


\end{document}